\DeclarePairedDelimiter{\ceil}{\lceil}{\rceil}
\newcommand{\HIGH}{\sf HIGH}
\newcommand{\LOW}{\sf LOW}
\newcommand{\N}{\sf N}
\newcommand{\A}{\sf A}
\newcommand{\U}{\sf USED}
\newcommand{\F}{\sf FREE}
\newcommand{\UC}{\sf USED\_A}
\newcommand{\FC}{\sf FREE\_A}
\newcommand{\CT} {\sf COUNT}
\newcommand{\Insert}{\sf Insert}
\newcommand{\Delete}{\sf Delete}
\newcommand{\getColor}{\sf getColor}
\title{Trade-offs in dynamic coloring for bipartite and general graphs} 
\titlerunning{Dynamic coloring for Bipartite and General Graphs}
\author{Manas Jyoti Kashyop, N. S. Narayanaswamy, Meghana Nasre and Sai Mohith Potluri}{Department of Computer Science and Engineering \\ Indian Institute of Technology Madras, Chennai, India}{\{manasjk, swamy, meghana\}@cse.iitm.ac.in, saimohith71@gmail.com}{}{}
\authorrunning{M. J. Kashyop, N. S. Narayanaswamy, M. Nasre, and S. M. Potluri}
\keywords{Dynamic Graph Algorithms, Graph coloring}
\begin{document}

\maketitle
\begin{abstract}
We present trade-offs involving update time, query time, and number of recolorings in the incremental and fully dynamic settings to maintian  a proper coloring of bipartite graphs and general graphs.  
We present the following trade-off for a bipartite graph with $n$ vertices:
\begin{itemize}
\item For any fully dynamic $2$-coloring algorithm, the maximum of the update time, number of recolorings, and query time is $\Omega(\log n)$. 
\item There exists a deterministic fully dynamic $2$-coloring algorithm with $O(\log^2 n)$ amortized update time, $O(\log n)$ amortized query time, and one recoloring in the worst case.
\item For any incremental $2$-coloring algorithm which explicitly maintains the color of every vertex after each update, the amortized update time and the amortized number of recolorings is $\Omega(\log n)$. Further, for such an algorithm, in the worst case the  update time and the number of recolorings is $\Omega(n)$.
\item There exists a deterministic incremental $2$-coloring algorithm which explicitly maintains the color of every vertex after each update, with amortized $O(\log n)$ update time and amortized $O(\log n)$ many recolorings. Further, for the same algorithm, in the worst case the  update time and the number of recolorings is $O(n)$.
\item There exists a deterministic incremental $(1+2 \log n)$-coloring algorithm which explicitly maintains the color of every vertex after each update, with $O(\alpha(n))$ amortized update time, one recoloring in the worst case and $O(1)$ worst case query time, where $\alpha(n)$ is the inverse Ackermann function. 
\item There exists a deterministic incremental $2$-coloring algorithm which does not maintain color of every vertex after each update, with amortized $O(\alpha(n))$  update time, amortized $O(\alpha(n))$ many recolorings, and amortized $O(\alpha(n))$ query time, where $\alpha(n)$ is the inverse Ackermann function.
\end{itemize}	
For general graphs and graphs of bounded arboricity, we present the following results, where $\Delta$ is the maximum degree of a vertex.
\begin{itemize}
\item There exists a deterministic $(\Delta+1)$-coloring algorithm with $O(\sqrt{m})$ update time, $O(1)$ query time, and one recoloring, all in the worst case. Here $m$ denotes the maximum number of edges throughout the update sequence.
\item There exists a deterministic $(\Delta+1)$-coloring algorithm with
amortized $O(\gamma + \log{n})$ update time, $O(1)$ worst case query time, and one recoloring in the worst case. Here $\gamma$ is the bound on arboricity. 
\end{itemize}
\end{abstract}

\section{Introduction}
\label{sec:Introduction}
Given an undirected graph $G(V, E)$, a proper $c$-coloring of $G$ assigns to each vertex
in $G$ a  color from the set $\{1, 2, \ldots, c\}$ such that no two adjacent vertices
are assigned the same color.
A graph $G$ is $2$-colorable {\em iff} $G$ is bipartite \cite{CLRS}. 
Further, it is well-known that a graph can be vertex colored using at most $\Delta+1$ colors, where $\Delta$ is the maximum degree.  A fully dynamic coloring algorithm supports the $\Insert$$(u,v)$ to insert an edge between vertex $u$ and vertex $v$,
$\Delete$$(u,v)$ to delete the edge $(u,v)$, and
$\getColor$$(u)$ to query the color of vertex $u$.  An incremental coloring algorithm only supports $\Insert$ and $\getColor$ operations. 
The goal is to maintain a proper $c$-coloring of the vertices in $V$ of $G$. $c$  participates in a trade-off with the query time and the update time.   In this work we  also consider the number of recolorings per update as a parameter in a trade-off.   Each recoloring counts a vertex whose color is modified during an update. The number of recolorings per update is a special case of {\em adjustment complexity} \cite{DBLP:AssadiOSS18} which is a bound on the number of vertices or edges affected per update.\par  
We prove our lower bound on the trade-off in the  cell-probe model.  Cell-probe model  has been a standard for proving dynamic data structure lower bounds, as in the works of Fredman and Saks \cite{Fredman:1989:CPC:73007.73040} and P\u{a}tra\c{s}cu and Demaine \cite{PatrascuD06}.  In this model the memory is organized into words and a data structure occupies a certain number of words.  The queries are processed by an online algorithm which may update the data structure. As the query sequence is processed one query after the other, the online algorithm proceeds by reading from or writing words into the data structure.   The running time of the dynamic algorithm is defined to be equal to the number of probes into the data structure.  One natural trade-off is between the size of the data structure and the running time of the dynamic algorithm \cite{Miltersen99cellprobe}.
 The trade-off between the query time $t_q$ (number of times a word is read from the data structure) and the update time $t_u$ (number of times a word is written into the data structure) is also of interest.  P\u{a}tra\c{s}cu has proved lower bounds on the trade-off between $t_u$ and $t_q$, and some of these are proved based on a lower bound on the communication complexity of lopsided (asymmetric) set disjointness \cite{DBLP:journals/corr/abs-1010-3783}.  

\noindent
{\bf {Our results on bipartite graphs}}\\
In Section~\ref{subsec:lowerbounds-bipartite}, we present a lower bound in the cell-probe model on the  trade-off between update time, query time, and 
the number of recolorings for a fully dynamic 2-coloring algorithm.  In the cell-probe model we additionally focus on the words that store the colors assigned to the vertices.   Let $t_R$ be the number of times a word storing the color of a vertex is modified during an update and  $t_D$ = $t_u - t_R$. So, $t_D$ denotes the number of modifications to words which do not store a vertex color. 
\renewcommand{\labelitemi}{$\bullet$}
\begin{itemize}
\item In the cell-probe model of computation, any data structure for fully dynamic 2-coloring of bipartite graph satisfies the following trade-offs : $t_q \log\left(\frac{t_u}{t_q}\right)$ = $\Omega\left(\log n\right)$ and $t_u \log\left(\frac{t_q}{t_u}\right)$ = $\Omega(\log n)$. These bounds hold under amortization, non-deterministic queries, and randomization, and even if the graph is a disjoint union of paths.
Further, $\max \{t_D, t_R, t_q\} = \Omega(\log n)$. In other words, either the query time is $\Omega(\log n)$, or number of recolorings is $\Omega(\log n)$, or $t_D$ is $\Omega(\log n)$ (Theorem~\ref{thm:LowerBound-2coloring}).  
\end{itemize}
 Independent of this work, a similar reduction is used in \cite{DBLP:journals/corr/Henzinger} to show that any data structure for dynamic $\Delta$-colorability testing, where $\Delta$ is the maximum degree in the graph, must perform $\Omega(\log n)$ cell probes.\par
If the color of every vertex is explicitly maintained after every update then the query $\getColor$ takes $O(1)$ time, and we refer to this as the maintenance of an explicit coloring. Consequently, From the above lower bound result, $t_D + t_R$ is $\Omega(\log n)$.  This leads to a natural question of whether there is an
algorithm that takes $\Omega(\log n)$ time for the queries and $o(\log n)$ time for the updates.  We address this question by considering algorithms that do not maintain the color of every vertex after each update. However, such an algorithm computes the required color on a $\getColor$ query. We refer this approach as maintaining an implicit coloring. 
In Section~\ref{sec : Bipartite_Fully_Dynamic_Algorithm} we present a fully dynamic implicit 2-coloring algorithm.
\begin{itemize}
\item There exists a deterministic fully dynamic implicit $2$-coloring algorithm for bipartite graphs with amortized $O(\log^2 n)$ update time, amortized $O(\log n)$ query time, and one recoloring in the worst case (Theorem~\ref{thm:FullyDynamicBipartiteImplicit}).  
\end{itemize}
\noindent
In Section~\ref{subsec:incremental-LB} we explore if the lower bound result in Theorem~\ref{thm:LowerBound-2coloring} can be overcome in the incremental setting for maintaining an explicit 2-coloring.   We show that deterministic incremental explicit 2-coloring algorithms  have challenging update sequences generated by an adaptive adversary.
\begin{itemize}
\item For any deterministic incremental explicit 2-coloring algorithm for bipartite graph, amortized update time is $\Omega(\log n)$ and worst case update time is $\Omega(n)$. Further, for such an algorithm, amortized number of recolorings is $\Omega(\log n)$ and worst case number of recolorings is $\Omega(n)$ (Lemma~\ref{lem:lowerbound-incremental}).  
\end{itemize}
We present an upper bound  which matches the lower bound in Lemma~\ref{lem:lowerbound-incremental} by using the classical decremental connectivity algorithm due to Even and Shiloach \cite{Shiloach:1981:OEP:322234.322235}.
\begin{itemize}
\item There exists a deterministic incremental explicit $2$-coloring algorithm with amortized $O(\log n)$ update time and amortized $O(\log n)$ many recolorings. Further, for the same algorithm, in the worst case, the  update time and the number of recolorings is $O(n)$.  
\end{itemize} 
In Section~\ref{subsec:Incremental-Explicit-Coloring}, we next consider incremental explicit coloring of bipartite graphs using  $(1+ 2\log n)$ colors and show that it is possible to have significantly smaller query and update time, and number of recolorings in comparison to what is achieved by maintaining a 2-coloring.
\begin{itemize}
\item There exists a deterministic {\em explicit} $(1+2\log n)$-coloring algorithm for bipartite graphs with amortized $O(\alpha(n))$ update time, $O(1)$ worst case query time, and one recoloring in the worst case, where $\alpha(n)$ is the inverse Ackermann function (Theorem~\ref{thm:IncrementalBipartiteExplicit}).
\end{itemize}
In Section~\ref{subsec:Incremental-Implicit-Coloring}, we present the following result:
\begin{itemize}
\item There exists a  deterministic implicit $2$-coloring
incremental algorithm for bipartite graphs with amortized $O(\alpha(n))$ update time, query time, and number of recolorings, where $\alpha(n)$ is the inverse Ackermann function (Theorem~\ref{thm:IncrementalBipartiteImplicit}). 
\end{itemize}
\noindent 
{\bf {Our results on general graphs and bounded arboricity graphs}}\\
In Section~\ref{sec: GeneralGraph}, we present our result for general graphs in the fully dynamic setting.     
\begin{itemize}
\item There exists a deterministic $(\Delta+1)$-coloring algorithm with
$O(\sqrt{m})$ update time, $O(1)$ query time, and one recoloring in the worst case, where $\Delta$ is the maximum degree of a vertex and $m$ is the maximum number of edges in the graph at any point during the update sequence (Theorem~\ref{thm:FullyDynamicGeneralGraph}).
\end{itemize}
The final result, presented in Section \ref{sec:Low-Arboricity-Graphs}, is for the special case of bounded arboricity graphs.
\begin{itemize}
\item For graphs with arboricity bounded by $\gamma$, there exists a fully dynamic deterministic 
($\Delta$+1)-coloring algorithm with $O(\gamma + \log n)$ amortized update time, $O(1)$ worst case query time, and one recoloring in the worst case, where $n$ is the number of vertices in the graph (Theorem~\ref{thm:FullyDynamicLowArboricity}).
\end{itemize}
\noindent
The algorithms mentioned above maintain a $\Delta + 1$ coloring 
as in the current best randomized algorithms in \cite{DBLP:journals/corr/sayan} and \cite{DBLP:journals/corr/Henzinger}. 
These algorithms are known to perform $\Omega(\log n)$  and  $\Omega(n)$ recolorings, respectively.   On the other hand, our deterministic algorithm performs only a constant number of  recolorings per update.  This reduction in the number of recoloring comes at the cost of the update time which is $O(\sqrt{m})$ in the worst-case.  On the other hand, being  deterministic, our algorithm (Theorem~\ref{thm:FullyDynamicGeneralGraph}) works even against an adaptive adversary, whereas the randomized algorithms in \cite{DBLP:journals/corr/sayan}, and \cite{DBLP:journals/corr/Henzinger} assume that adversary is non-adaptive and oblivious. This is similar to the case of the fully dynamic maximal matching problem where the best randomized algorithm takes $O(1)$ update time \cite{DBLP:conf/focs/Solomon16} while the best known deterministic algorithm takes $O(\sqrt{m})$ update time \cite{Neiman}. In the fully dynamic setting, it remains open as to whether there is a deterministic $(\Delta + c)$-coloring algorithm with update time $o(\sqrt{m})$ for some constant $c >1$. In Table~\ref{tab:General} we tabulate the current results \cite{DBLP:journals/corr/sayan,DBLP:journals/corr/Henzinger} along with our results on the different parameters in the trade-offs we consider.\\
\begin{table}[h!]
	\footnotesize	
	\caption{Note that our deterministic fully dynamic 2-coloring algorithm for bipartite graphs $\max\{t_D, t_R, t_q\} = O(\log^2 n)$, and this is far from the  $\Omega(\log n)$ lower bound in Theorem~\ref{thm:LowerBound-2coloring}.}   
	\centering	
	\begin{tabular}{|m{2cm} | m{2.2cm} | m{1.5cm}| m{2cm} | m{2cm}| m{2cm}|} 
		\hline \vspace{2mm}
		Graph  & Type & Number & Update time & Query time & Number of \\
		&	& of colors &  &  & recolorings in the worst case \\ 
		\hline \vspace{2mm}
		Bipartite	& Deterministic  & $2$ & $O(\log^2 n)$ & $O(\log n)$ & $1$  \\
		(Theorem~\ref{thm:FullyDynamicBipartiteImplicit}) & fully dynamic & & amortized & amortized & \\
		\hline \vspace{2mm}
		Bipartite	& Deterministic  & $1+2\log n$ & $O(\alpha(n))$ & $O(1)$ & $1$  \\
		(Theorem~\ref{thm:IncrementalBipartiteExplicit})& incremental & & amortized & worst case & \\
		\hline \vspace{2mm}
		Bipartite	& Deterministic  & $2$ & $O(\alpha(n))$ & $O(\alpha(n))$ & $\Omega(\log n)$\\
		(Theorem~\ref{thm:IncrementalBipartiteImplicit})& incremental & & amortized & amortized & \\
		\hline \vspace{2mm}
		General	& Deterministic  & $\Delta + 1$ & $O(\sqrt{m})$ & $O(1)$ & $1$  \\
		(Theorem~\ref{thm:FullyDynamicGeneralGraph})& fully dynamic & & worst case & worst case & \\
		\hline \vspace{2mm}
		$\gamma-$arboricity & Deterministic  & $\Delta + 1$ & $O(\gamma + \log n)$ & $O(1)$ & $1$  \\
		(Theorem~\ref{thm:FullyDynamicLowArboricity}) & fully dynamic & & amortized & worst case& \\
		\hline \vspace{2mm}	
		General \cite{DBLP:journals/corr/sayan} &	Randomized  & $\Delta + 1$ & $O(1)$  & $O(1)$ & $\Omega(\log n)$ \\
		&fully dynamic & & amortized & worst case & \\ 
		\hline \vspace{2mm}
		General \cite{DBLP:journals/corr/Henzinger} & Randomized & $\Delta + 1$ & $O(1)$  & $O(1)$ & $\Omega(n)$ \\
		& fully dynamic & & amortized & worst case & \\
		\hline
	\end{tabular}
	\label{tab:General}	
\end{table}
\textbf{Related work: } Bhattacharya~et~al.\cite{BCHN18} presented a randomized $(\Delta+1)$-coloring 
algorithm with expected amortized $O(\log\Delta)$ update time. Very recently, Bhattacharya et~al.\cite{DBLP:journals/corr/sayan} and independently Henzinger et~al.\cite{DBLP:journals/corr/Henzinger} presented randomized algorithms for $(\Delta+1)$-coloring in constant amortized update time. In \cite{BCHN18}, Bhattacharya~et~al.  also presented a deterministic algorithm with $O(\text{polylog} \Delta)$ amortized update time, but using $(\Delta+ o(\Delta))$ colors. 
In a different work, Barba~et~al.~\cite{Barba} studied 
the trade-off between the number of colors used and the number of recolorings per update. 
For any $d > 0$, they presented two algorithms, one using $O(Cdn^{\frac{1}{d}})$ colors and 
performing $O(d)$ recolorings and the other using $O(Cd)$ colors and performing $O(dn^{\frac{1}{d}})$ recolorings, where $C$ is the chromatic number. 
However, both these algorithms require exponential running time. Solomon~et~al.~\cite{SolomonW18} presented an algorithm which uses $\tilde{O}(\frac{C}{d} \log^{2}n)$ colors and performs $O(d)$ recolorings per update, where $\tilde{O}$ suppresses polyloglog$(n)$ factors.
\section{Fully dynamic coloring for bipartite graphs}
\label{sec:Incremental-Bipartite}
In this section we introduce the number of recolorings into the well-known, due to P\u{a}tra\c{s}cu et al. \cite{PatrascuD06}, trade-off between update time and query time in the cell-probe model. We then present our fully dynamic algorithm that maintains an implicit 2-coloring with one recoloring.
\subsection{Lower bound on full dynamic 2-coloring of bipartite graphs}
\label{subsec:lowerbounds-bipartite}
Barba et al. \cite{Barba} showed that for any $c \geq 2$, any fully dynamic algorithm that maintains a $c$-coloring of a 2-colorable graph on $n$ vertices must recolor at least $\Omega(n^{\frac{2}{c(c-1)}})$ vertices per update. Their result immediately implies that if we use only two colors for bipartite graphs, then the worst case update time for any fully dynamic algorithm with $O(1)$ query time is $\Omega(n)$. We present a trade-off between update and query time, and the number of recolorings for a fully dynamic 2-coloring algorithm. Recall the definition of $t_u, t_q, t_D$, and $t_R$ from the introduction.  
\begin{theorem}
	\label{thm:LowerBound-2coloring}	
	In the cell-probe model of computation, any data structure for fully dynamic 2-coloring of bipartite graph satisfies the following trade-offs : $t_q \log\left(\frac{t_u}{t_q}\right)$ = $\Omega\left(\log n\right)$ and $t_u \log\left(\frac{t_q}{t_u}\right)$ = $\Omega(\log n)$. These bounds hold under amortization, non-deterministic queries, and randomization, and even if the graph is a disjoint union of paths.
	Further, $\max \{t_D, t_R, t_q\} = \Omega(\log n)$. In other words, either query is $\Omega(\log n)$, or number of recolorings is $\Omega(\log n)$, or $t_D$ is $\Omega(\log n)$.   
\end{theorem}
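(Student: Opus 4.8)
The plan is to reduce the \emph{prefix-parity} problem to fully dynamic $2$-coloring. In prefix-parity one maintains a bit array $A$ of length $n$ under flips of single entries and answers queries $\mathrm{par}(i)=A[1]\oplus\cdots\oplus A[i]$; this is essentially dynamic connectivity restricted to disjoint unions of paths, and P\u{a}tra\c{s}cu and Demaine \cite{PatrascuD06} established for it exactly the two stated trade-offs, with the lower bound robust to amortization, randomization, and non-deterministic queries. Since the reduction below has only $O(1)$ overhead and changes $n$ by a constant factor, the same trade-offs are inherited by any $2$-coloring data structure, and because the coloring instance built is a single path (hence a disjoint union of paths, hence bipartite) the strengthened statement for path graphs comes for free.

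The reduction is through a parity gadget. Introduce ports $p_0,p_1,\ldots,p_n$ and, for each $i$, a private vertex $x_i$. The $i$-th gadget linking $p_{i-1}$ to $p_i$ is the single edge $p_{i-1}p_i$ when $A[i]=1$ (odd length) and the two edges $p_{i-1}x_i,\ x_ip_i$ when $A[i]=0$ (even length); an unused $x_i$ simply sits isolated, so the graph stays a disjoint union of paths at all times. Chaining the gadgets gives one path, and in any proper $2$-coloring of it $\getColor(p_i)\oplus\getColor(p_0)$ equals the parity of the distance from $p_0$ to $p_i$, which by construction is $\bigoplus_{j\le i}A[j]=\mathrm{par}(i)$; this XOR is unchanged if the two color classes are globally swapped, so it does not matter which proper coloring the data structure happens to keep. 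A flip of $A[i]$ is realized by at most three $\Insert$/$\Delete$ operations (trade one edge for the two through $x_i$, or vice versa), and $\mathrm{par}(i)$ is recovered with two $\getColor$ queries. Thus a fully dynamic $2$-coloring structure on $\Theta(n)$ vertices with update time $t_u$ and query time $t_q$ yields a prefix-parity structure on $n$ bits with update time $O(t_u)$ and query time $O(t_q)$; plugging into the P\u{a}tra\c{s}cu--Demaine bounds gives $t_q\log(t_u/t_q)=\Omega(\log n)$ and $t_u\log(t_q/t_u)=\Omega(\log n)$, and the reduction being black-box with constant overhead, these carry over to the amortized, randomized, and non-deterministic-query settings.

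For the final assertion, recall $t_D=t_u-t_R$, hence $t_u=t_D+t_R$ and $\max\{t_D,t_R\}\ge t_u/2$. The two trade-offs already force $\max\{t_u,t_q\}=\Omega(\log n)$: if both were $<\log n$ then, taking w.l.o.g.\ $t_u\ge t_q$, the inequality $t_q\log(t_u/t_q)=\Omega(\log n)$ gives $t_u\ge t_q\cdot 2^{c\log n/t_q}$ for a constant $c$, and the function $t\mapsto t\cdot 2^{c\log n/t}$ stays $\Omega(\log n)$ throughout $(0,\log n]$ (it is minimized at $t=\Theta(\log n)$ with value $\Theta(\log n)$), a contradiction. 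Therefore $\max\{t_D,t_R,t_q\}\ge\max\{t_u/2,\ t_q\}\ge\tfrac{1}{2}\max\{t_u,t_q\}=\Omega(\log n)$, i.e.\ on every update either the query is slow, or $\Omega(\log n)$ vertices are recolored, or $\Omega(\log n)$ non-color cells are rewritten.

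I expect the main point needing care to be the faithfulness of the reduction in all the robust regimes: that the single-path coloring instance is always properly $2$-colorable so the data structure's answers are meaningful, that the global swap ambiguity never corrupts a queried XOR, that composing two $\getColor$ calls preserves non-determinism and bounded error of queries, and that the constant blow-up in operation count together with the one-time $O(n)$ construction of the initial all-zero array do not affect the amortized bounds. Once this is settled, the $t_D/t_R$ refinement is essentially automatic, since it uses nothing beyond $t_u=t_D+t_R$ and the $\max\{t_u,t_q\}=\Omega(\log n)$ corollary; the only genuinely new ingredient relative to \cite{PatrascuD06} is isolating the recoloring count $t_R$ within $t_u$.
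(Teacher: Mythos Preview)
Your proof is correct and takes a genuinely different route from the paper's. The paper reduces from fully dynamic \emph{connectivity}: given a connectivity query ``are $u$ and $v$ connected?'', it queries $\getColor(u)$ and $\getColor(v)$, then inserts a short chain through two auxiliary vertices $a,b$ so that the insertion closes an odd cycle iff $u$ and $v$ were already connected; it relies on the $2$-coloring data structure to \emph{report a bipartiteness violation} at that point, and afterwards deletes the auxiliary edges. You instead reduce directly from \emph{prefix-parity} (the actual hard distribution in P\u{a}tra\c{s}cu--Demaine) via a length-parity gadget on a single path, and recover the answer purely from two $\getColor$ calls.

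Your approach buys two things. First, it does not require the $2$-coloring structure to expose any ``odd-cycle detected'' signal; you only use $\Insert$, $\Delete$, and $\getColor$ on instances that are bipartite at every moment, so the lower bound applies even to data structures that simply promise a proper $2$-coloring under the guarantee that the input stays bipartite. Second, your instance is literally a disjoint union of paths at all times, so the ``even if the graph is a disjoint union of paths'' clause is immediate; in the paper's reduction, attaching $a$ to a vertex $u$ that is interior to a path momentarily creates a degree-$3$ vertex, so that clause needs an extra word of care there. Conversely, the paper's reduction is more general in spirit: it shows that any fully dynamic $2$-coloring oracle (with violation reporting) solves full dynamic connectivity, not just the path special case. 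For the $\max\{t_D,t_R,t_q\}=\Omega(\log n)$ corollary both arguments coincide, using $t_u=t_D+t_R$; your explicit verification that the trade-offs force $\max\{t_u,t_q\}=\Omega(\log n)$ is a welcome addition over the paper's one-line appeal.
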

\begin{proof}
Our proof is by a reduction from fully dynamic connectivity for bipartite graphs to fully dynamic  2-coloring for bipartite graphs.    Let $\mathcal{A}$ be a fully dynamic algorithm for 2-coloring of bipartite graphs. $\mathcal{A}$ reports violation of bipartiteness if insertion of an edge creates an odd length cycle and supports the following operations: $\Insert$$(u,v)$ to insert an edge between $u$ and $v$, $\Delete$$(u,v)$ to delete the edge between $u$ and $v$, and $\getColor$$(u)$ to report the color of vertex $u$. In the reduction, the response to a  connectivity query characterized by  the response to a constant number of insert,  query, and delete updates presented $\mathcal{A}$. As a consequence of this reduction and the known lower bound for fully dynamic connectivity for bipartite graphs from P\u{a}tra\c{s}cu et al. \cite{PatrascuD06}  we show the lower bound in the Theorem~\ref{thm:LowerBound-2coloring} follows.\par
We now present our reduction by using $\mathcal{A}$ to solve fully dynamic connectivity in bipartite graphs. The vertex set of the bipartite graph consists of two new additional vertices $a$ and $b$.
An edge $(u,v)$ is inserted or deleted using the operation $\Insert$$(u,v)$ or $\Delete$$(u,v)$, respectively.  A connectivity query is of the form : \textit{are $u$ and $v$ connected}? and it is reduced to a sequence of operations in $\mathcal{A}$ as follows: first obtain  the color of $u$ and $v$ using $\getColor$$(u)$ and $\getColor$$(v)$.\\
\textbf{Case $1$:} $u$ and $v$ are of same color.  The following sequence of updates are presented to $\mathcal{A}$: $\Insert$$(u,a)$, $\Insert$$(v,b)$, $\Insert$$(a,b)$. 
If $\Insert$$(a,b)$ violates bipartiteness, then report that $u$ and $v$ are connected. Otherwise, report that $u$ and $v$ are not connected. \\ 
\textbf{Case $2$:} $u$ and $v$ are of different color.  The following sequence of updates are presented to $\mathcal{A}$: $\Insert$$(u,a)$, $\Insert$$(v,a)$.   If $\Insert$$(v,a)$ violates bipartiteness, then report that $u$ and $v$ are connected. Otherwise, report that $u$ and $v$ are not connected. \\
 Finally, in either case  $\mathcal{A}$ is presented with updates which delete the edges inserted which are incident on $a$ or $b$.  Further, in either case the decision on whether $u$ and $v$ are connected is correct, and this follows from the fact that a connected bipartite graph has a unique bipartition.
 
 Due to the reduction, it follows from the known lower bound for fully dynamic connectivity for bipartite graphs from P\u{a}tra\c{s}cu et al. \cite{PatrascuD06} that
 $\max \{t_u t_q\} = \Omega(\log n)$.   By definition, $t_R$ is the time spent by cell-probes that changes the color value of vertices, and $t_D = t_u - t_R$.  Therefore, it follows that $\max \{t_D, t_R, t_q\} = \Omega(\log n)$. Hence the theorem.
%
\end{proof} 
In case of explicit 2-coloring of bipartite graphs $t_q = O(1)$.  Therefore, it follows from Theorem \ref{thm:LowerBound-2coloring}, either $t_D$ = $\Omega(\log n)$ or $t_R$ = $\Omega(\log n)$.  In other words, if an algorithm maintains an explicit 2-coloring at the end of each update, then during the update it must recolor many vertices,  and if it does not recolor many vertices, it must perform many probes into the data structure.
\vspace{2cm}
\subsection{Fully dynamic  deterministic implicit 2-coloring}
\label{sec : Bipartite_Fully_Dynamic_Algorithm}
\noindent
Our  algorithm for implicit 2-coloring uses the fully dynamic connectivity algorithm due to Holm et al. \cite{HLT} and the link-cut tree data structure \cite{ST} to maintain an additional representation of a spanning forest.\\
\textbf{Implicit coloring: }To maintain an implicit coloring, we select an arbitrary vertex $r$ in a connected component, which we refer to as the root of the component.  Since a connected bipartite graph has a unique bipartition, fixing the color of the root immediately defines the color of all the other vertices in the component. Thus, on an update only the color of the root of the appropriate component is changed. The color of each vertex is thus defined by the the parity of the path length from the vertex to the root of its component: it is same as the color of the root if the parity is even, and the other color if the parity is odd.  A dynamic algorithm for 2-coloring can be designed by using algorithms for connectivity and computing the parity of the length of a path. 
Further, an edge insertion between two vertices of the same color in a connected component is forbidden in a bipartite graph.   Thus, during an insert, 
the algorithm due to Holm et al.  \cite{HLT} is also used to check whether two vertices of the same color belong to the same connected component.  The parity of a path length is computed using link-cut trees \cite{ST} \par
The  idea for the algorithm due to Holm et al. \cite{HLT} is to store a spanning tree for every connected component of the input graph $G$. The input graph $G$ is hierarchically partitioned into $\log n$ levels, where $n$ is the total number of vertices, by assigning levels to the edges. The level of an edge is an integer in the interval $[0,\log n]$.  
 The algorithm maintains the invariant that the level of an edge can only decrease over time and this is crucially used to achieve the amortized bounds in the algorithm. For each $0 \leq i \leq \log n$,  $G_i$ is the subgraph of $G$ composed of edges of level at most $i$. Therefore, $G_{\log n}$ is $G$.  Further, for each $0 \leq i \leq \log n$, $F_i$ is the spanning forest of $G_i$ in which each edge has a weight which is the level of the edge. Therefore, $F_{\log n}$ is the spanning forest for $G$ with one tree for each connected component of $G$.  Every spanning forest is maintained using the  Euler-Tour(ET) tree data structure \cite{HK}.
 The data structure due to Holm et al. supports the following operations:\\
\textit{Connected(u,v)} : Reports if $u$ and $v$ are in the same connected component in $\mathcal{O}(\log n)$ time. \\
\textit{Conn-Insert(u,v)} : Insert an edge between $u$ and $v$. This operation takes $\mathcal{O}(\log n)$ time. \\
\textit{Conn-Delete(u,v)} : Delete the edge $(u,v)$. This operation takes amortized $\mathcal{O}(\log^2 n)$ time. We use the data structure with a modification to the Delete in Holm et al. \cite{HK} by ensuring that it returns a value. If $(u,v)$ is a non-tree edge, then a 0 is returned. If $(u,v)$ is a tree edge that has a replacement edge as computed by Holm et al. algorithm, that edge is returned. Otherwise -1 is returned.
However,  ET-trees do not provide for an efficient way to extract the  properties of a path between two vertices.  This is addressed by additionally maintaining a copy of $F_{\log n}$ using link-cut trees \cite{ST}. The link-cut tree supports the following operations in amortized $\mathcal{O}(\log n)$ time.\\
\textit{Find-root(v) : } Returns root of the tree which contains vertex $v$.\\ 
\textit{Link(u,v) :} Inserts the edge $(u,v)$ into the link-cut tree and makes $u$  a new child of $v$.\\
\textit{Cut(v):} Deletes the edge between $v$ and parent of $v$ where $v$ is not a root.\\
\textit{Path-Length(v) :} Returns the length of the path from the root of the tree to $v$.\\ 
\textbf{Implicit coloring using Holm et al. and link-cut trees:} \\
Our data structure for maintaining a 2-coloring has two colors in its palette : {\sf TRUE} and {\sf FALSE}. It is initialized with an empty graph on $n$ vertices with {\sf TRUE} as the vertex color.  Then, the data structures of Holm et al. and the link-cut tree are initialized.  
Every vertex stores an attribute indicating its color. The query $\getColor$$(v)$ is obtained by first finding
 the root, say $r$, of the tree containing $v$ using \textit{Find-root}$(v)$. The length of the path from $v$ to $r$ is obtained using a query to \textit{Path-Length}$(v)$. 
 If the length is even, then we output the  color of $r$ as the color of $v$, otherwise we output the negation of the color of $r$. This operation takes the same time as \textit{Find-root} and \textit{Path-Length} and hence contributes $\mathcal{O}(\log n)$ to the amortized running time. The updates $\Insert$$(u,v)$ and $\Delete$$(u,v)$ are described below.\\
\textbf{Insert(u,v):} On the insert of an edge $(u,v)$, the color of $u$ and $v$ is computed using $\getColor$$(u)$ and $\getColor$$(v)$, respectively. We have the following cases:
\begin{enumerate}
\item Color of $u$ is different from color of $v$, no recoloring is required.  Call \textit{Connected$(u,v)$}.
\begin{enumerate}
\item Case when $u$ and $v$ belong to the same component.  Call \textit{Conn-Insert(u,v)} to insert the edge $(u,v)$ into the Holm et al. data structure.
\item Case when $u$ and $v$ belong to different components. Call \textit{Conn-Insert(u,v)} to insert the edge into the Holm et al. data structure. This is followed by a call to \textit{Link}$(u,v)$ to update the link-cut tree, since the edge will be added to the forest $F_{\log n}$.
\end{enumerate}  
\item Color of $u$ is same as color of $v$, this necessitates a possible recoloring.   Call \textit{Connected}$(u,v)$.
\begin{enumerate}
\item Case $u$ and $v$ belong to the same component. Then, edge $(u,v)$ is ignored and no data structure is udpated.
\item Case $u$ and $v$ belong to different components.  Call \textit{Conn-Insert(u,v)} to insert the edge into the Holm et al. data structure. This is followed by a call to \textit{Link}$(u,v)$ to update the link-cut tree, since the edge will be added to the forest $F_{\log n}$, thus reducing the number of components. 
\end{enumerate}
\end{enumerate}
\textbf{Delete(u,v):}
While deletion does not necessitate any color change, the connectivity information must be appropriately updated.  This is done by \textit{Conn-Delete(u,v)}, and the following exhaustive cases are considered based on the return value of \textit{Conn-Delete(u,v)}.\\  
\begin{enumerate}
\item Case edge $(u,v)$ is a non-tree edge. The forest $F_{\log n}$ does not change, and thus the link-cut tree also need not change.
\item Case edge $(u,v)$ is a tree edge. Wlog let  $u$ be parent of $v$ in $F_{\log n}$ prior to the delete.  If $(u,v)$ does not have a replacement edge, then the forest $F_{\log n}$ has one more component after the delete,  then the link-cut tree is updated by a call to \textit{Cut}$(v)$.  On the other hand, if the replacement edge is $(x,y)$, then call \textit{Cut}$(v)$ followed by a call to \textit{Link}$(x,y)$.
\end{enumerate}
\textbf{Analysis :} Since the colors of all the vertices were initialized to {\sc True}, throughout the execution each vertex has some color associated with it.  In a bipartite connected graph the color of one vertex fixes the colors of all the components.  Therefore, $\getColor$ will return the correct color of each vertex in a component relative to the color of the root vertex of the component in the link-cut tree.  Therefore, the data structure maintains an implicit 2-coloring in the fully dynamic setting.  Other than the operations supported by Holm et al. \cite{HK}, the link-cut tree and the $\getColor$ operation, every other operation  in our algorithm takes constant time. In any update step, there are at most a constant many calls to the Holm et al. data structure, the link-cut tree, and  $\getColor$.    Therefore, our algorithm supports updates in amortized $O(\log^2 n)$ time and reports the color of a vertex in amortized $O(\log n)$ time. Thus, we have the following Theorem. 
\begin{theorem}
	\label{thm:FullyDynamicBipartiteImplicit}
	Starting with an empty bipartite graph with $n$ vertices, an implicit 2-coloring can be maintained deterministically over any sequence of insertion and deletion in amortized $O(\log^2 n)$ update time, amortized $O(\log n)$ query time, and one recoloring in the worst case.
\end{theorem}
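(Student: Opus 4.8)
The plan is to combine the three data structures already developed—Holm et al.'s fully dynamic connectivity structure \cite{HLT}, the modified \textit{Conn-Delete} returning a replacement edge, and a parallel copy of the spanning forest $F_{\log n}$ stored as a link-cut tree \cite{ST}—to maintain an \emph{implicit} 2-coloring in which only one vertex (a component root) ever has its color changed during an update. First I would set up the invariants: each component has a designated root $r$ in the link-cut tree; every vertex carries a color attribute, but this attribute is only \emph{semantically} meaningful for roots; and the color of a non-root vertex $v$ is \emph{defined} to be the color of its root if \textit{Path-Length}$(v)$ is even and the negation otherwise. Since a connected bipartite graph has a unique bipartition, this definition is consistent, so the correctness of $\getColor$ reduces to the correctness of \textit{Find-root} and \textit{Path-Length}, both $O(\log n)$ amortized.

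Next I would verify that each update case preserves the invariants and triggers at most one recoloring. For $\Insert(u,v)$: if $u,v$ have different colors there is no recoloring, and we just call \textit{Conn-Insert}, plus a \textit{Link} if the two components merge (the link-cut tree must learn the new forest edge). If $u,v$ have the same color and are in the same component, the edge would create an odd cycle, so it is rejected—no recoloring. If $u,v$ have the same color but lie in distinct components, we merge the components with \textit{Conn-Insert} and \textit{Link}; here I must argue that after the \textit{Link} the parity relation still forces a proper coloring. The key point is to check which root survives the merge and whether its stored color must flip: if $u$ becomes (via re-rooting inside the link-cut tree, if the library re-roots) a descendant of the surviving root, the parity of $u$ changes, and the whole subtree's defined colors flip accordingly; one may need to negate the surviving root's stored color to keep $u$ and $v$ on opposite sides of the new edge. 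Since at most one root's stored attribute is touched, this is one recoloring in the worst case. For $\Delete(u,v)$: deletion never violates bipartiteness and never forces a recoloring, but if $(u,v)$ is a tree edge we must \textit{Cut} it from the link-cut tree and, if \textit{Conn-Delete} returns a replacement edge $(x,y)$, \textit{Link} it back in; again one must check parity consistency of the reattached subtree, but no stored color need change since the two resulting (or recombined) components each already have a valid root.

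For the complexity analysis I would observe that every operation in the wrapper is $O(1)$ apart from a constant number of calls per update to the Holm et al. operations (amortized $O(\log^2 n)$ for \textit{Conn-Delete}, $O(\log n)$ for \textit{Conn-Insert} and \textit{Connected}) and to the link-cut tree (\textit{Link}, \textit{Cut}, \textit{Find-root}, \textit{Path-Length}, each amortized $O(\log n)$), plus at most two $\getColor$ calls, each $O(\log n)$. This yields amortized $O(\log^2 n)$ update time and amortized $O(\log n)$ query time, with exactly one recoloring in the worst case, matching the statement of Theorem~\ref{thm:FullyDynamicBipartiteImplicit}.

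The main obstacle I anticipate is the bookkeeping around re-rooting in the link-cut tree during \textit{Link} in case 2(b) of \textit{Insert} (and the analogous reattachment in \textit{Delete}): one must pin down precisely how the designated root changes, how the parities of an entire subtree shift as a block, and confirm that a single negation of one root's stored color suffices to restore a proper coloring on both sides of every forest edge—in particular the freshly inserted or replacement edge. Everything else (correctness of $\getColor$ given the parity definition, the non-tree-edge cases, and the amortized timing) is a routine composition of the cited black-box guarantees.
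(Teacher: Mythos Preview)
Your proposal is correct and follows essentially the same approach as the paper: combine Holm et al.'s connectivity structure with a link-cut tree copy of the spanning forest, define colors implicitly via parity of the root-to-vertex path, and handle the four insert cases and the tree/non-tree delete cases exactly as you describe. The paper's own write-up is in fact less explicit than yours about the re-rooting bookkeeping you flag as the main obstacle; one small caveat is your assertion that in \textit{Delete} ``no stored color need change'': when \textit{Cut}$(v)$ promotes $v$ to a root (and possibly again when a replacement \textit{Link} re-roots), you should set the new root's stored color to its previously implicit color, but this is still at most one recoloring per update and does not affect the stated bounds.
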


\vspace{1cm}

\section{Trade-offs in incremental coloring for bipartite graphs}
\label{subsec:Bipartite-incremental}
We start this section by presenting our results on incremental explicit 2-coloring.
\vspace{-2mm}
\subsection{Lower bound on incremental explicit 2-coloring}
\label{subsec:incremental-LB}
\begin{lemma}
\label{lem:lowerbound-incremental}	
For any deterministic incremental algorithm that maintains an explicit 2-coloring of a bipartite graph, amortized update time is  $\Omega(\log n)$ and worst case update time is $\Omega(n)$. Further, for such an algorithm, amortized number of recolorings is $\Omega(\log n)$ and worst case number of recolorings is $\Omega(n)$.
\end{lemma}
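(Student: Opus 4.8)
The plan is to play an adaptive adversary against an arbitrary deterministic incremental explicit $2$-coloring algorithm $\mathcal{B}$, building a single spanning tree on $n$ vertices through a balanced sequence of component merges, each of which is \emph{forced} to recolor an entire component. Since $\mathcal{B}$ is deterministic the adversary knows the current coloring at every step, so it can always pick exactly the edge it wants.

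The core is the following structural fact. Suppose $C_1$ and $C_2$ are two disjoint connected components of the current forest, each currently carrying a proper $2$-coloring $\chi$, and the adversary inserts an edge $(x,y)$ with $x \in C_1$, $y \in C_2$ and $\chi(x)=\chi(y)$ (such a ``monochromatic'' cross-edge exists whenever both $C_1$ and $C_2$ have at least two vertices, because a connected bipartite graph on $\ge 2$ vertices uses both colors). Then, to restore a proper $2$-coloring, $\mathcal{B}$ must change the color of at least $\min(|C_1|,|C_2|)$ vertices. Indeed, the merged graph $T = C_1 \cup C_2 \cup \{(x,y)\}$ is a connected bipartite graph, hence has exactly two proper $2$-colorings, one the complement of the other; the restriction of any such coloring to $C_i$ is a proper $2$-coloring of the connected graph $C_i$, so it is either $\chi|_{C_i}$ or $\overline{\chi}|_{C_i}$. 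If a valid coloring of $T$ agrees with $\chi$ on all of $C_1$, then it sends $x$ to $\chi(x)=\chi(y)$, so it must send $y$ to $\overline{\chi}(y)$, i.e.\ it disagrees with $\chi$ on all of $C_2$; symmetrically it otherwise disagrees with $\chi$ on all of $C_1$. Either way at least $\min(|C_1|,|C_2|)$ vertices are recolored, and since each recoloring is a distinct word-write, this also costs at least $\min(|C_1|,|C_2|)$ in update time.

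Now the adversary, assuming $n$ is a power of two. Phase $0$: insert an arbitrary perfect matching, producing $n/2$ properly $2$-colored components of size $2$ (whatever $\mathcal{B}$ does here only helps us). For $i=1,\dots,\log_2 n - 1$, in Phase $i$ we have $n/2^{i}$ components of size $2^{i}$; pair them up arbitrarily, and for each pair $(C_1,C_2)$ inspect $\mathcal{B}$'s coloring, choose a monochromatic cross-edge, and insert it. By the structural fact this single insertion forces at least $2^{i}$ recolorings, and there are $n/2^{i+1}$ merges in Phase $i$, so Phase $i$ incurs at least $n/2^{i+1}\cdot 2^{i} = n/2$ recolorings. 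Summing over the $\log_2 n - 1$ phases yields $\Omega(n\log n)$ recolorings over a total of $n-1$ insertions, hence amortized $\Omega(\log n)$ recolorings per update, and therefore amortized $\Omega(\log n)$ update time. For the worst-case bounds, note that the lone merge in the last phase joins two components of size $n/2$ and so, by the structural fact, forces $\Omega(n)$ recolorings — and $\Omega(n)$ writes — in a single update.

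The only delicate point — and the main, though minor, obstacle — is the structural fact: one must argue that $\mathcal{B}$ genuinely has no option to recolor only ``part'' of a component, which rests on the uniqueness of the bipartition of a connected bipartite graph, together with the bookkeeping that keeps every component of size at least two so the adversary can always locate a monochromatic cross-edge. Once that is in place, the rest is just the geometric-series count above. (If desired, the same construction can be carried out so that every intermediate component is a path, by always joining the like-colored endpoints of two equal-length paths, matching the ``disjoint union of paths'' flavour of Theorem~\ref{thm:LowerBound-2coloring}.)
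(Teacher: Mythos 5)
Your proposal is correct and follows essentially the same strategy as the paper: an adaptive adversary that repeatedly merges two components of comparable size via an edge between like-colored vertices, so that the uniqueness of the bipartition forces the algorithm to recolor the entire smaller component, giving $\Omega(n\log n)$ total cost over $n-1$ insertions and $\Omega(n)$ in the final merge. Your write-up is in fact more detailed than the paper's sketch (which phrases the construction in terms of balanced paths), but the underlying argument is the same.
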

\begin{proof}
We prove the lemma by constructing an update sequence such that any deterministic incremental algorithm which maintains an explicit 2-coloring will have the claimed update and query time, and  number of recolorings. 
The graph is initialized to be the $n$ vertex graph with no edges, and the deterministic incremental algorithm is assumed to maintain an explicit 2-coloring.  They key observation is that in any 2-coloring, at least  $\frac{n}{2}$ vertices will get the same color.\\
The update sequence is described as follows: an adaptive adversary, who can execute the incremental algorithm, always presents an edge update that is between two vertices of the same color, and its addition merges two paths into a longer path. Further, the edge update is also chosen to guarantee that the length of any two disjoint paths are within a factor of 2 of each other.
Since the algorithms maintains an explicit 2-coloring of a forest of paths, all the vertices in one of the paths must be recolored. 
This ensures that the total update time and total number of recolorings over $n-1$ inserts is $\Omega(n\log n)$. Further, as the paths become longer,   the worst case update time and the number of recolorings is $\Omega(n)$.  Hence the lemma.\\
\end{proof}
\noindent
\textbf{An optimal incremental explicit 2-coloring algorithm: }
We present an explicit 2-coloring deterministic incremental algorithm that takes $O(m \log n)$ time for $m$ inserts and hence optimal in presence of the lower bound in Lemma~\ref{lem:lowerbound-incremental}. The algorithm maintains a 2-coloring using an idea similar to the decremental algorithm for connectivity due to Even and Shiloach \cite{Shiloach:1981:OEP:322234.322235}. It works as follows : We maintain the color for every vertex. If an edge insertion between two vertices of same color merges two connected component then we update the color for the vertices in the smaller component. For a vertex belonging to a smaller component, the component size it is present in has  at least doubled after merging. Therefore, a vertex can participate in at most $\log n$ merges where it is part of the smaller component.  Therefore, the total update time is $O(m \log n)$.
\subsection{Incremental  deterministic explicit ($1+2\log n$)-coloring}
\label{subsec:Incremental-Explicit-Coloring}
In this section, we present a deterministic near-constant update time incremental algorithm for explicit coloring of bipartite graphs using at most $1+2\log n$ colors which performs at most $1$ recoloring per update.  The analysis of the incremental algorithm that we present has some common features with the analysis of the online vertex coloring algorithm \cite{Lovsz}.  However, the scenario in the online vertex coloring is that the vertices are to be colored once in the order presented to the algorithm, and this is different from the incremental setting, where edges are presented one after the other, and vertex colors may be changed to guarantee bounds on the number colors used, number of recolorings, query time, and update time. The algorithm presented here repeatedly selects  an appropriate color for one of the two vertices during an edge insert, and the end ensures that the coloring is proper.  Thus, it performs at most $1$ recoloring per update. \\
\textbf{Algorithm: }
The colors are numbered from $1$ to $1+2\log n$.  
Initially, the graph is empty and  all the $n$ vertices are assigned color $1$. An edge insertion leads to recoloring of a vertex only if both endpoints of the edge are of the same color. Otherwise, no recoloring is required. When an edge is inserted between two vertices of the same color, we arbitrarily choose one of the vertices for recoloring. 
For the recoloring, consider the bipartition, into two independent sets, of the connected component containing the chosen vertex.  Compute the smallest color not used by any vertex in the partition not containing this vertex.   
This color is used to recolor the vertex. Lemma~\ref{thm:2logncolors} bounds the number of colors used by our algorithm. 
\begin{lemma}
\label{thm:2logncolors}
If a connected component in the graph has a vertex with color $t$, then the component has at least $2^{\left \lfloor \frac{t}{2} \right \rfloor}$ vertices. Therefore, total number of colors used by our algorithm is at most $1+2\log n$. 
\end{lemma}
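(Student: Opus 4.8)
The plan is to deduce the colour bound from the size bound. If it is true that every connected component holding a vertex of colour $t$ always has at least $2^{\lfloor t/2\rfloor}$ vertices, then, since a component never has more than $n$ vertices, $2^{\lfloor t/2\rfloor}\le n$, i.e. $t\le 1+2\log n$, which caps the palette. So everything reduces to the size bound, which I would restate in the scaled form $P(k)$: at every moment, every connected component that contains a vertex of colour $\ge 2k$ has at least $2^k$ vertices; invoking $P(\lfloor t/2\rfloor)$ recovers the stated bound. I would prove $P(k)$ by induction on $k$, with a subsidiary induction on the update sequence. The cases $P(0)$ and $P(1)$ are trivial, the latter because any vertex of colour $\ge 2$ has been recoloured and hence has a neighbour, so its component has at least two vertices.

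For the inductive step I would first reduce to a single recolouring: colours other than the initial colour $1$ are created only by recolourings, and connected components only grow under insertions, so it suffices to show that whenever the algorithm recolours a vertex $u$ to some colour $c\ge 2k$, the component of $u$ at that moment already has $\ge 2^k$ vertices. Here the key structural fact, forced by the recolouring rule, is that if $u$ lies on side $A$ of its component and is recoloured to $c$, then the opposite side $B$ contains, at that moment, a vertex of every colour $1,2,\dots,c-1$. In particular $B$ holds a vertex $p$ of colour $2k-1$ and a (different) vertex $q$ of colour $2k-2$. Looking back to the last times $p$ and $q$ received those colours produces two earlier component snapshots $D_p$ and $D_q$, both contained in the present component $C$, and by the induction hypothesis $P(k-1)$ each of $D_p,D_q$ has at least $2^{k-1}$ vertices.

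The crux is then to obtain the factor-$2$ jump, i.e. to conclude $|C|\ge 2^k$ from $|D_p|,|D_q|\ge 2^{k-1}$. If $D_p$ and $D_q$ are disjoint we are done, so the delicate point is exactly this disjointness. I would argue it from the fact that two component snapshots sharing a vertex are nested in time, which reduces matters to the case $D_q\subseteq D_p$ (after renaming); there one shows $|D_p|\ge 2^k$ directly — either because some witness vertex once carried a colour $\ge 2k$, in which case the subsidiary time induction already gives $|C|\ge 2^k$, or by extracting a second $2^{k-1}$-block of $D_p$ disjoint from $D_q$ from the colour-$(2k-2)$ witness sitting on $p$'s opposite side. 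I expect this nested case to be the one genuinely delicate part of the proof, and if the above case analysis proves unwieldy the natural remedy is to strengthen $P(k)$ so that it also bounds below the sizes of the two bipartition classes of the component, not merely the total; the remaining bookkeeping — that components only grow and that the opposite side of a recoloured vertex realizes all smaller colours — is routine. Once $P(k)$ is established for all $k$, the bound $2^{\lfloor t/2\rfloor}\le n$, hence $t\le 1+2\log n$, follows immediately.
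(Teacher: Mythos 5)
Your reduction of the colour bound to the size bound, the base cases, and the key structural observation --- that when a vertex is recoloured to $c$ the opposite part of its component must at that instant contain every colour $1,\dots,c-1$ --- are all correct, and the overall doubling strategy (two historical sub-components, each of size at least $2^{k-1}$ by the induction hypothesis, sitting disjointly inside the current component) is the same as the paper's. The gap is exactly where you flag it: the disjointness of $D_p$ and $D_q$. With your choice of witnesses ($p$ of colour $2k-1$ and $q$ of colour $2k-2$, both on the side $B$ opposite the recoloured vertex) the two snapshots really can be nested, and neither of your proposed patches closes that case: the appeal to the subsidiary time induction only helps if some earlier component already held a colour $\ge 2k$, which need not happen, and ``extracting a second $2^{k-1}$-block from the colour-$(2k-2)$ witness on $p$'s opposite side'' just reproduces the same nesting problem one level down, with no visible termination. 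As written, the inductive step is not established.

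The paper avoids the nested case by choosing witnesses of the \emph{same} colour in \emph{opposite} parts of the bipartition. Concretely, with $t=2k-2$: the part opposite the recoloured vertex contains a vertex $x$ of colour $t$ and a vertex $y$ of colour $t+1$; when $y$ last received colour $t+1$, \emph{its} opposite part (contained in the other side of the current component) held a vertex $u$ of colour $t$. For two colour-$t$ vertices in opposite parts, disjointness of their historical components is forced directly by the recolouring rule: consider the later of their two last recolourings --- that vertex receives the smallest colour \emph{absent} from its opposite part, so the other vertex (which already carries colour $t$ at that moment) cannot lie in that opposite part; hence the two were then in different components, each containing a colour-$t$ vertex, and the induction hypothesis applies to both disjoint pieces. (A short case analysis handles the possibility that $u$ was recoloured between $y$'s recolouring and the present; the recolouring of $u$ is itself triggered by an edge to another colour-$t$ vertex in the opposite part, and the same argument applies to that pair.) To repair your proof, replace the pair $(p,q)$ by such a same-colour, opposite-parts pair; strengthening $P(k)$ to bound both sides of the bipartition, as you suggest, is not needed.
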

\begin{proof}
	The claim is trivially true for a vertex of color $1$. We proceed by induction on the color $t$. Let $P_1$ and $P_2$ be the two independent sets in the bipartition of the connected component containing $v$, and $P_1$ contains $v$.
	Let the color of $v$ be $t+2$. Since we are in an incremental setting, components can only merge with updates. Given that the graph is bipartite, if a vertex is in $P_2$
	it continues to remain so after all subsequent updates. When $v$ was colored $t+2$, $P_2$ must have had at least one vertex of each color from $1$ to $t+1$ and no vertex of color $t+2$. Let $x$ and $y$ be vertices in $P_2$ colored $t$ and $t+1$, respectively. Similarly, when $y$ was colored $t+1$, $P_1$ must have had a vertex $u$ of color $t$. We have the following cases:\\
	\textbf{case $1$:} Color of $u$ is $t$ at the time of assigning color $t+2$ to $v$. This means that both $P_1$ and $P_2$ have a vertex of color $t$. This could not have happened if both $u$ and $x$ belonged to the same component the last time either of them were recolored. This implies that two components, each of them with a vertex of color $t$, merged to form the current component.\\
	\textbf{case $2$:} Color of $u$ is not $t$ at the time of assigning color $t+2$ to $v$. The only way the color of $u$ could have changed is if there was an edge inserted which is incident on $u$ and  another vertex of color $t$, say $z$. This insertion could have led to the merging of two components, both with a vertex of color $t$. If not, both these vertices are in the same connected component before the insertion of $(u,z)$ and they were already both colored $t$. Since the graph is bipartite, $u$ and $z$ belong to different partitions of the connected component containing them prior to the insertion of $\{u,z\}$. Like the argument in case $1$, we observe that the last time either of $u$ or $z$ was recolored, they must have been in two different connected components.  Subsequent to the  recoloring after which both $u$ and $z$ had the color $t$, the two components merged to form the current component.\par
	From the induction hypothesis we get that in both cases the current component has at least $2^{\left \lfloor \frac{t}{2} \right \rfloor} + 2^{\left \lfloor \frac{t}{2} \right \rfloor} = 2^{\left \lfloor \frac{t+2}{2} \right \rfloor}$ vertices. Hence, by the principle of mathematical induction, our claim is true for all natural numbers $t$. Therefore, total number of colors used by our algorithm is at most $1 + 2 \log n$.
\end{proof}
\textbf{Implementation:}
We use the disjoint-set data structure from the Union-Find algorithm. In addition to $\textit{parent}$ and $\textit{size}$, here, each vertex has an attribute called $\textit{flag}$. It is a boolean attribute. For a vertex $x$, $x.flag = {\sf TRUE}$ iff $x$ and $x.parent$ are in the same bipartition in G. Further, each vertex also has the attributes $color$, two binary words $w1$ and $w2$ of $1 + 2\log n$ bits where the bits are indexed by the integers $0$ to $2\log n$ from right to left, respectively.\\ 
\begin{minipage}{0.5\textwidth}	
\begin{algorithm}[H]
\footnotesize	
	\caption{MAKESET-EXPLICIT initializes the data structures}
	\begin{algorithmic}[1]
		\Function{MAKESET-EXPLICIT}{$x$}
		\If{$x$ not already present:}
		\State{add $x$ to the disjoint set tree}
		\State{$x.parent = x$}
		\State{$x.flag = {\sf TRUE}$}
		\State{$x.size = 1$}
		\State{$x.color = 1$}
		\State{$x.w1 = 0111\ldots$ ($1+2\log n$ bits)}
		\State{$x.w2 = 1111\ldots$ ($1+2\log n$ bits)}
		\EndIf
		\EndFunction
	\end{algorithmic}
	\label{algo: MAKESET-EXPLICIT(x)}
\end{algorithm}
\vspace{1mm}	
\end{minipage}
\hspace{2mm}
\begin{minipage}{0.5\textwidth}
\vspace{3mm}		
	\begin{algorithm}[H]
		\footnotesize
		\caption{FIND computes the root of the component containing vertex $x$}
		\begin{algorithmic}[1]
			\Function{FIND}{$x$}
			\If{$x.parent \neq x$}
			\State{$(root, flag) = FIND(x.parent)$}
			\State{$x.parent = root$}
			\State{$x.flag = flag$ XNOR $x.flag$}
			\EndIf
			\State{RETURN $(x.parent,x.flag)$}
			\EndFunction
		\end{algorithmic}
		\label{algo: FIND(x)}
	\end{algorithm}
\end{minipage}
Procedure MAKESET-EXPLICIT
is called on each vertex (Algorithm~\ref{algo: MAKESET-EXPLICIT(x)}) . This is to initialize the vertices and their attributes. Each vertex is the root of its singleton component, its flag is set to ${\sf TRUE}$, and size to $1$.
The $color$ attribute is initialized to $1$, and as the algorithm progresses the $color$ of each vertex is a proper coloring and it is explicitly maintained. Like $size$, $w1$ and $w2$ are only maintained by the root of the component.   After an update step if a vertex is not  the root of a  set, then $w1$ and $w2$  maintained for the vertex are not used subsequently.  For each vertex $w1$ and $w2$ are status words which indicate which colors are unused (a bit is set if it's index is unused in the corresponding part)  in the two parts of the connected component associated with the set for which the vertex is the root.The word $w1$ is associated with the part which contains the root.\par	
On the insertion of an edge $(x,y)$, procedure UNION-EXPLICIT(Algorithm~\ref{algo: UNION-EXPLICIT(x,y)}) is called.  
Within UNION-EXPLICIT, procedure FIND (Algorithm~\ref{algo: FIND(x)}) is called on vertices $x$ and $y$ to find the roots of their respective components. If they have the same root, then they both belong to the same component. If both have same flag also, then they belong to the same partition which violates bipartiteness of the graph. Otherwise, it is a valid insertion. If both have different roots, the two components are merged by taking the union of the two corresponding sets, and making one of the two roots as the root of the resulting set.  Further,  $w1$ and $w2$ associated wtih the root of the new component is obtained by performing a bitwise AND of the appropriate words associated with the pervious two roots.  In particular, the $w1$ ($w2$) of the root is obtained by taking the bitwise AND of its $w1$ ($w2$) before the union with the $w2$  ($w1$) of the other vertex before the union.  Once maintaining the component information is done we check the colors of $x$ and $y$. If they are different nothing needs to be done. If they are same, one of them needs to be recolored. We recolor the vertex in the same part of the bipartition as the root, which is the part $P_1$, and this is found by computing the flag of the vertex.  The least unused color in $P_2$ is the maximum index of a bit which is a 1 in $w2$ associated with the root. This can be found in constant time using the {\sf CLZ}(count leading zeros) hardware instruction. We use this color to recolor the vertex and then ensure that this color is set to 0  in $w1$ to indicate that it is used to color a vertex in $P_1$. The running time of this algorithm is same as the standard Union-Find algorithm as maintaining the additional attributes and the word operations are all $O(1)$ operations. Hence, the amortized running time per operation is bounded by the inverse Ackermann function which is near constant. We get Theorem~\ref{thm:IncrementalBipartiteExplicit}.		
\begin{theorem}
	\label{thm:IncrementalBipartiteExplicit}
	Starting with an empty bipartite graph with $n$ vertices, an explicit $(1+2\log n)$-coloring, performing one recoloring in the worst case, can be maintained deterministically over any sequence of insertions in amortized $O(\alpha(n))$ update time and worst case $O(1)$ query time, where $\alpha(n)$ is the inverse Ackermann function. 
\end{theorem}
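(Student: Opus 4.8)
The plan is as follows. The algorithm is already fully specified by the procedures MAKESET-EXPLICIT, FIND and UNION-EXPLICIT, so the proof amounts to verifying a short list of invariants maintained inductively over the sequence of edge insertions and then reading off the four claimed bounds. The invariants I would carry are: (I1) after every update, the \emph{color} attribute of each vertex equals its current color and this coloring is proper; (I2) for every non-root vertex $x$, $x.\mathit{flag}={\sf TRUE}$ iff $x$ and $x.\mathit{parent}$ lie on the same side of the (unique) bipartition of their connected component; and (I3) for the root $r$ of each component, a bit of $r.w1$ (resp.\ $r.w2$) is set iff the color with that index is used by no vertex on the side of the bipartition containing $r$ (resp.\ not containing $r$). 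All three are immediate after the MAKESET-EXPLICIT calls: every component is a singleton colored $1$ with flag ${\sf TRUE}$, so $r.w2$ records that no color is used on the (empty) opposite side and $r.w1$ records that only color $1$ is used on the root's side.

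Next I would check that one insert $(x,y)$ preserves the invariants after calling FIND on $x$ and $y$. Correctness of FIND under path compression is the standard parity argument: the recursive call returns the flag (parity of the path) from $x.\mathit{parent}$ to the new root, and XNOR-ing it with the old $x.\mathit{flag}$ (parity of the edge $x$--$x.\mathit{parent}$) yields the parity of the path from $x$ to the new root, which is the updated $x.\mathit{flag}$; so (I2) survives FIND. If the two roots coincide: when the returned flags agree, $x$ and $y$ lie on the same side and the edge closes an odd cycle, so reporting a bipartiteness violation is correct; when they differ, the edge is consistent and the only threat to (I1) is $x.\mathit{color}=y.\mathit{color}$, in which case we recolor exactly the endpoint lying on the root's side (identified by its flag) with the least color unused on the opposite side, obtained in $O(1)$ from $r.w2$ via a {\sf CLZ} instruction; by (I3) that color is absent from the opposite side, and after clearing the corresponding bit of $r.w1$ it is also recorded as used on $x$'s side, so (I1) and (I3) hold and exactly one recoloring occurred.

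If the two roots differ, the components merge and one root $r$ survives. Here the side of the merged component containing $r$ is the union of exactly one side from each of the two old components --- which two sides is dictated by the flags of $x$ and $y$ so that $x$ and $y$ end up on opposite sides of the new bipartition --- hence its set of used colors is the union of the two corresponding sets and its set of unused colors is their bitwise AND; the ``AND with the swapped word of the other root'' rule in UNION-EXPLICIT is precisely this, and the finite case analysis over the flag values of $x$ and $y$ also fixes the flag attached to the reattached root so that (I2) persists. No vertex other than the one possibly recolored in the same-root case changes color, so in all cases at most one recoloring happens per update and (I1)--(I3) are restored; properness across distinct components is automatic since there are no edges between them, and the palette size is at most $1+2\log n$ by Lemma~\ref{thm:2logncolors}.

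For the complexity, FIND and UNION-EXPLICIT are the ordinary union-by-size with path-compression operations, augmented only by $O(1)$ extra work at each visited node (a single XNOR in FIND) and $O(1)$ word operations per union (two bitwise ANDs, one bit-clear, one {\sf CLZ}), all performed on $(1+2\log n)$-bit status words that fit in $O(1)$ machine words; hence the amortized cost per insertion is $O(\alpha(n))$, matching classical union--find. A $\getColor$ query simply reads the stored \emph{color} attribute, so it runs in $O(1)$ worst-case time. The main obstacle --- and the only place where care is really needed --- is the bookkeeping establishing (I2) and (I3) across a merge: pinning down, for each configuration of the flags returned by FIND on $x$ and $y$, which of $r_y.w1$ and $r_y.w2$ gets AND-ed into each of $r_x.w1$ and $r_x.w2$ and which flag is assigned to the reattached root, so that the convention ``$w1$ always refers to the side containing the current root'' is genuinely invariant; everything else is routine.
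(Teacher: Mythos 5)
Your proposal takes essentially the same route as the paper: the theorem is obtained by checking that FIND and UNION-EXPLICIT correctly maintain the bipartition flags and the status words, invoking Lemma~\ref{thm:2logncolors} for the palette size, and noting that everything beyond standard union--find costs $O(1)$ per operation, so the bounds are those of union by size with path compression plus an $O(1)$-time read of the stored color attribute. Two remarks. First, your invariant (I3) is stated as an \emph{iff} and would not survive the induction in that form: when a vertex is later recolored away from a color $c$, its side may cease to contain any vertex of color $c$, yet the corresponding bit of the status word stays cleared; the words only under-approximate the free sets, and the direction you actually use (bit set $\Rightarrow$ color unused on that side) is the one that holds and is all that is needed for the recoloring to be safe. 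Second, your handling of the same-root case --- running the recoloring step when the endpoints lie on opposite sides of an existing component but share a color --- is not what the paper's pseudocode literally does (it returns early there), and it is in fact necessary: after a merge the two sides of a component can share colors among non-adjacent vertices, so a later intra-component insertion can join two equally colored vertices; your version repairs this.
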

\begin{algorithm}[H]	
\footnotesize	
	\caption{UNION-EXPLICIT handles insertion of edge $(x,y)$}
	\begin{algorithmic}[1]
		\Function{UNION-EXPLICIT}{$x,y$}
		\State{$(xRoot, xFlag) = FIND(x)$}
		\State{$(yRoot, yFlag) = FIND(y)$}
		\If{$xRoot == yRoot$} 
		\If{$xFlag == yFlag$}
		\State{ERROR, INVALID EDGE INSERTION}
		\Else
		\State{RETURN} 
		\EndIf
		\Else 
		\If{$xRoot.size < yRoot.size$}
		\State{swap($xRoot$, $yRoot$)}
		\State{swap($xFlag$, $yFlag$)}
		\EndIf
		\State{$yRoot.parent = xRoot$}
		\State{$xRoot.size = xRoot.size + yRoot.size$}
		\State{$yRoot.flag = xFlag$ XOR $yFlag$}
		\If{$yRoot.flag$}
		\State{$xRoot.w1 = xRoot.w1$ AND $yRoot.w1$}
		\State{$xRoot.w2 = xRoot.w2$ AND $yRoot.w2$}
		\Else
		\State{$xRoot.w1 = xRoot.w1$ AND $yRoot.w2$}
		\State{$xRoot.w2 = xRoot.w2$ AND $yRoot.w1$}
		\EndIf
		\EndIf
		\If{$x.color == y.color$}
		\If{$x.flag$}
		\State{$x.color$ = CLZ$(xRoot.w2) + 1$}
		\State{$xRoot.w1[x.color] = 0$}
		\Else
		\State{$y.color$ = CLZ$(xRoot.w2) + 1$}
		\State{$xRoot.w1[y.color] = 0$}
		\EndIf
		\EndIf
		\EndFunction
	\end{algorithmic}
	\label{algo: UNION-EXPLICIT(x,y)}	
\end{algorithm}
\vspace{-4mm}
\noindent

\subsection{Incremental deterministic implicit 2-coloring}
\label{subsec:Incremental-Implicit-Coloring}
In this section, we present a deterministic near-constant update time incremental algorithm for implicit 2-coloring of bipartite graphs.\\ 
\textbf{Data Structures: }
We use the disjoint-set data structure from the Union-Find algorithm. In addition to $\textit{parent}$ and $\textit{size}$, here, each element also has an attribute called $\textit{flag}$. It is a boolean attribute. For a vertex $x$, $x.flag = {\sf TRUE}$ iff $x$ and $x.parent$ are in the same bipartition in G. In other words, $x.flag = {\sf TRUE}$ iff $x$ and its parent have the same color. The flag attribute can be thought of as the relative coloring between a vertex and its parent. So, the flag for a vertex changes only when its parent changes. We have two colors in our palette: ${\sf TRUE}$ and ${\sf FALSE}$. The flag and the color of the root are always ${\sf TRUE}$. This means that if $x.parent$ is the root, then $x.flag$ represents its color.\\
\begin{minipage}{0.5\textwidth}
	\textbf{Preprocessing:}  MAKESET-IMPLICIT procedure (Algorithm~\ref{algo: MAKESET-IMPLICIT(x)}) is called on on all $n$ vertices in the graph. This is to initialize the vertices and their attributes. Each vertex is the root of its singleton component and its flag is set to ${\sf TRUE}$. The size attribute of a vertex is relevant only if the vertex is a root.
	\vspace{0.08in}
\end{minipage}
\hspace{2mm}
\begin{minipage}{0.5\textwidth}
	\begin{algorithm}[H]
		\caption{MAKESET-IMPLICIT initializes the data structures}
		\footnotesize
		\begin{algorithmic}[1]
			\Function{MAKESET-IMPLICIT}{$x$}
			\If{$x$ not already present:}
			\State{add $x$ to the disjoint set tree}
			\State{$x.parent = x$}
			\State{$x.flag = {\sf TRUE}$}
			\State{$x.size = 1$}
			\vspace{1.5mm}
			\EndIf
			\EndFunction
		\end{algorithmic}
		\label{algo: MAKESET-IMPLICIT(x)}
	\end{algorithm}
	\vspace{0.08in}
\end{minipage}
\noindent
\textbf{Algorithm: }
We use the Union-Find algorithm with path compression and union by size. As is well known, during path compression the parent of a vertex might change. This is when its flag also needs to be updated. 
FIND($x$) procedure (see Algorithm~\ref{algo: FIND(x)}) returns the root of the component containing $x$ and the updated flag of $x$ (note that flag of $x$ represents color of $x$ because parent of $x$ is root after FIND$(x)$).
Inside FIND($x$) procedure, we make a recursive call to FIND($x.parent$) which returns the root of the component and updated flag of $x.parent$. If $x.parent$ is not in the same bipartition as the root in G (i.e $x.parent.flag = {\sf FALSE}$) and $x$ and $x.parent$ are in the same bipartition in G (i.e. $x.flag = {\sf TRUE}$), then $x$ is not in the same bipartition as the root in G. Its flag must be changed to ${\sf FALSE}$ while updating its parent as root. The table below considers all the possible combinations and it corresponds to the XNOR gate.
\begin{table}[h!]
	\label{table1}
	\footnotesize
	\begin{center}
		\begin{tabular}{l|c|r} 
			\textbf{$x.flag$} & \textbf{$x.parent.flag$} & \textbf{Updated $x.flag$}\\
			\hline
			& &\\
			${\sf TRUE}$ & ${\sf TRUE}$ & ${\sf TRUE}$\\
			${\sf TRUE}$ & ${\sf FALSE}$ & ${\sf FALSE}$\\
			${\sf FALSE}$ & ${\sf TRUE}$ & ${\sf FALSE}$\\
			${\sf FALSE}$ & ${\sf FALSE}$ & ${\sf TRUE}$\\
		\end{tabular}
	\end{center}
\end{table}
\begin{algorithm}[H]
	\footnotesize
	\caption{UNION-IMPLICIT handles insertion of edge $(x,y)$}
	\begin{algorithmic}[1]
		\Function{UNION-IMPLICIT}{$x,y$}
		\State{$(xRoot, xColor) = FIND(x)$}
		\State{$(yRoot, yColor) = FIND(y)$}
		\If{$xRoot == yRoot$} 
		\If{$xColor == yColor$}
		\State{ERROR, INVALID EDGE INSERTION}
		\Else
		\State{RETURN} 
		\EndIf
		\Else 
		\If{$xRoot.size < yRoot.size$}
		\State{swap($xRoot$, $yRoot$)} 
		\State{swap($xColor$, $yColor$)}
		\EndIf
		\State{$yRoot.parent = xRoot$}
		\State{$xRoot.size = xRoot.size + yRoot.size$}
		\State{$yRoot.flag = xColor$ XOR $yColor$}
		\EndIf
		\EndFunction
	\end{algorithmic}
	\label{algo: UNION-IMPLICIT(x,y)}
\end{algorithm}
\noindent
When an edge $(x,y)$ is inserted, we call UNION-IMPLICIT($x,y$) (Algorithm~\ref{algo: UNION-IMPLICIT(x,y)}). The procedure works as follows : We call FIND($x$) and FIND($y$). FIND($x$) returns root of $x$(denoted by $xRoot$) and color of $x$(denoted by $xColor$). Similarly, FIND($y$) returns root $yRoot$ and color $yColor$ of $y$. We have the following cases:  
\begin{enumerate}
	\item $xRoot$ is same as $yRoot$.
	\begin{enumerate}
		\item $xColor$ is same as $yColor$. Then this edge insertion introduces an odd length cycle violating the bipartiteness property of the graph. Hence, such edge insertions are invalid.
		\item $xColor$ is different from $yColor$. Then no further processing is required. 
	\end{enumerate} 
	\item $xRoot$ is different from $yRoot$. Then, the insert merges two components. To complete merging, without loss of generality, we assume that $x$ belongs to the larger component. We make $xRoot$ the parent of $yRoot$. If $x$ and $y$ have the same color, then the colors of all vertices in $y$'s component need to be flipped. But in \textit{Implicit} coloring we only need to change color for $yRoot$ and this is achieved by changing $yRoot.flag$ to ${\sf FALSE}$. This implicitly recolors all vertices in $y$'s component as their flags represent relative coloring. $yRoot.flag$ can be obtained by using the XOR operation on $xColor$ and $yColor$. Note that $yRoot$ is the only vertex whose parent has been updated and hence the only vertex whose flag needs to be updated.
\end{enumerate}
\begin{table}[h!]
	\vspace{-1mm}	
	\begin{center}
		\label{tab:table2}
		\begin{tabular}{l|c} 
			\textbf{Update/Query} & \textbf{Action}\\
			\hline
			Insert($x,y$) & Call UNION-IMPLICIT($x,y$)\\
			get-Color(x) & Call FIND($x$), Return $x.flag$
		\end{tabular}
	\end{center}
\end{table}
The extra running time incurred by the above procedures compared to the standard Union-Find is $O(1)$. Hence, the time complexity for Union-Find holds here. The amortized running time per operation is $O(\alpha(n))$, where $\alpha(n)$ is the inverse Ackermann function. $\alpha(n)$ is less than $5$ for any $n$ that can be expressed in the physical universe. Hence, each of the above operations run in near constant time. We get the following Theorem:
\begin{theorem}
\label{thm:IncrementalBipartiteImplicit}
Starting with an empty bipartite graph with $n$ vertices, an implicit 2-coloring can be maintained deterministically over any sequence of insertion in amortized $O(\alpha(n))$ update time, amortized $O(\alpha(n))$ query time, and amortized $O(\alpha(n))$ many recolorings, where $\alpha(n)$ is the inverse Ackermann function. 
\end{theorem}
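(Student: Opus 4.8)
The plan is to establish correctness of the augmented Union--Find data structure of Algorithm~\ref{algo: MAKESET-IMPLICIT(x)}, Algorithm~\ref{algo: FIND(x)}, and Algorithm~\ref{algo: UNION-IMPLICIT(x,y)}, and then to read off the update time, query time, and recoloring bounds from the standard Union--Find analysis.

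\emph{The invariant.} The first step is to state precisely the invariant the structure maintains: for every vertex $x$, the bit $x.flag$ is ${\sf TRUE}$ exactly when $x$ lies on the same side of the (unique) bipartition of its connected component as $x.parent$ --- equivalently, $x.flag$ is the parity of $x$ relative to $x.parent$, an element of $\mathbb{Z}_2$. I would prove, by induction on the depth of the recursion in FIND, that after a call FIND$(x)$ returns, $x.parent$ is the root $r$ of $x$'s component and $x.flag$ is the parity of $x$ relative to $r$; since the root's color is pinned to ${\sf TRUE}$, this is exactly the color of $x$. The inductive step is line~5 of Algorithm~\ref{algo: FIND(x)}: composing ``$x$ relative to its old parent $p$'' with ``$p$ relative to $r$'' (supplied by the recursive call) gives ``$x$ relative to $r$'', and XNOR is the correct composition in $\mathbb{Z}_2$. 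The structural fact making all of this well-defined --- which I would cite explicitly --- is that a connected bipartite graph has a unique bipartition, so the relative parity depends only on connectivity, not on the shape of the Union--Find forest.

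\emph{Correctness of the update.} Next, using the FIND invariant, after the two FIND calls in UNION-IMPLICIT$(x,y)$ the values $xColor$ and $yColor$ are the true colors of $x$ and $y$. If the roots coincide, equal colors mean the inserted edge closes an odd cycle (correctly rejected) and unequal colors mean the edge is consistent (nothing to do). If the roots differ, making $yRoot$ a child of $xRoot$ only requires recording the parity of $yRoot$ relative to $xRoot$, which is $xColor \oplus yColor$ --- exactly line~17 of Algorithm~\ref{algo: UNION-IMPLICIT(x,y)} --- and no other flag changes, since $yRoot$ is the only vertex whose parent is modified. An induction over the whole update sequence then shows that a query on $x$ (a call FIND$(x)$ followed by returning $x.flag$) always reports a proper 2-coloring, so the structure maintains an implicit 2-coloring at all times.

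\emph{Complexity and recolorings.} Every extra operation the augmentation performs --- reading or writing a $flag$, an XOR, an XNOR, a size comparison --- is $O(1)$ and is attached to a node that the underlying algorithm already visits, so the asymptotic cost equals that of Union--Find with path compression and union by size: $O(m\,\alpha(n))$ total over a sequence of $m \ge n$ operations, i.e. amortized $O(\alpha(n))$ per operation; hence an edge insertion (one UNION) and a query (one FIND) each take amortized $O(\alpha(n))$ time. For the recoloring count one must be careful about the implicit model: a single UNION flips at most one flag, namely $yRoot.flag$, which implicitly recolors an entire component, yet the only color-encoding data ever \emph{written} are the $flag$ bits, and the number of such writes over the whole sequence --- one per UNION, plus one per node on each find path during path compression --- is again $O(m\,\alpha(n))$ by the same accounting, giving amortized $O(\alpha(n))$ recolorings. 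The only genuinely non-mechanical point in all of this is pinning down the relative-parity invariant and the XNOR/XOR compositions correctly in the presence of path compression, and justifying that correctness hinges on connectivity alone via uniqueness of the bipartition.
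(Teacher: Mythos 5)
Your proposal is correct and follows essentially the same route as the paper: the relative-parity invariant on $flag$, the XNOR composition under path compression, the XOR assignment to $yRoot.flag$ on union, and the observation that the augmentation adds only $O(1)$ work per node visited, inheriting the $O(\alpha(n))$ amortized bound from standard Union--Find. Your version is merely more explicit (formal induction on recursion depth and over the update sequence, and a careful accounting of flag writes as the measure of recolorings) than the paper's informal exposition.
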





\section{Fully dynamic $(\Delta + 1)$-coloring for general graphs}
\label{sec: GeneralGraph}
In this section, we present a deterministic fully dynamic algorithm for $\Delta + 1$ coloring with at most one  recoloring, worst case query time of $O(1)$, 
and worst case update time of $O(\sqrt{m})$. Recall that
 $\Delta$ is the maximum degree of a vertex in the graph and $m$ is the maximum number of edges in the graph at any point during the update sequence. Our algorithm
is inspired by the deterministic dynamic algorithm for maintaining a maximal matching by Neiman and Solomon~\cite{Neiman}; however we require a different set of data structures.\\
\noindent {\bf Overall approach:}
We partition the vertex set $V$ into two sets $V_{\HIGH}$ and $V_{\LOW}$. 
If  $deg(v) \geq \sqrt{2m}$ then $v \in V_{\HIGH}$; otherwise, $v \in V_{\LOW}$. 
Thus, $|V_{\HIGH}| \le \sqrt{2m}$ and  as in \cite{Neiman}, our goal is to have a bounded number of {\em high-degree} vertices. 
We say that  a color is \textit{used} for a vertex if it is assigned to at least one of its neighbors; otherwise we
say that the color is \textit{free} for the vertex. For every vertex $v \in V_{\HIGH}$, we maintain the set of used colors $\U(v)$ and free colors $\F(v)$. 
We aim to find a free color for a high-degree vertex efficiently. In contrast, for a low-degree vertex we can scan
all its neigbhors to find a free color.\\
\\{\bf Data Structures:}
We assume that we have  $\Delta+1$ colors numbered as $\{1,2,\ldots,\Delta+1\}$. 
We maintain the following data structures :
\begin{enumerate}
\item {\bf Color of a vertex:} For every vertex $v$, we store the color $C_v$ assigned to $v$. 
\item {\bf Degree of a vertex:} For every vertex $v$, we maintain its degree $d_v$ as an integer. This allows us to check if a vertex is of high degree or not in $O(1)$ time.
\item {\bf Neigbhorhood of a vertex:} For every vertex $v$, we maintain the set of edges incident on $v$ as
 a doubly linked list $\N(v)$. In addition, for every vertex we maintain an array($A_v$) of size $n$ which augments the list $\N(v)$. 
If the edge $(i, v)$ is present in $G$, then at index $i$ in the array, we store the pointer to the edge $(i, v)$ in $\N(v)$. Otherwise the index $i$ in the array points to {\sf  NULL}.  
\item {\bf Lists of Free and Used colors:} If $v \in V_{\HIGH}$, then we maintain two separate lists $\U(v)$ and $\F(v)$ as doubly linked lists.
The list $\U(v)$ contains the set of colors used by the neighbors of $v$, whereas the list $\F(v)$ contains the set of colors which are not used by the neighbors of $v$.
We remark that the color $C_v$ is contained in $\F(v)$.   
We augment these doubly linked lists with arrays $\FC_v$ and $\UC_v$ of size $\Delta+1$.\\
\begin{itemize}
\item  If color $j$ is in $\F(v)$ then entry $\FC_v[j]$, that is, the $j$-th entry of the free array contains a pointer to the node in the doubly linked list $\F(v)$ which stores the color $j$.
 The entry $\UC_v[j]$, that is, the $j$-th entry of  used array contains NULL. 
\item  If color $j$ is in $\U(v)$ then entry $\UC_v[j]$, that is, the $j$-th entry of the used array contains a pointer to the node in the doubly linked list $\U(v)$ which stores the color $j$.
 The entry $\FC_v[j]$, that is, the $j$-th entry of  free array contains NULL. 
\end{itemize}
It is easy to verify that using the above data structures we perform the following operations in constant time: (i) check whether a given color is free or used for a vertex $v$, (ii)
change the status of a given color for a vertex $v$ from used to free and vice versa.
\item {\bf Count of neighbours of each color:} For every high degree vertex $v$, we maintain an array $\CT_v$ of size $\Delta+1$. The $i$-th index of 
$\CT_v$ denotes the number of neighbors of $v$ which are colored using the color $i$.
\item {\bf List of high degree vertices: } We maintain the list of high degree vertices as a doubly linked list $L_{\HIGH}$. Clearly $|L_{\HIGH}| \leq \sqrt{2m}$.\\ 
\end{enumerate}
\begin{minipage}{0.5 \textwidth}
\textbf{Initialization: } Procedure {\sf INIT}(Algorithm~\ref{algo: Init}) is called to initialize the data structures required for each vertex. Even the data structures used only by $\textit{high-degree}$ vertices are initialized here although they will not be maintained until the vertex becomes $\textit{high-degree}$.\\
\end{minipage}
\hspace{2.5mm}
\begin{minipage}{0.5 \textwidth}
	\vspace{-1cm}
	\begin{algorithm}[H]
		\footnotesize
		\caption{{\sf INIT} initializes the data structures}
		\begin{algorithmic}[1]
			\Function{Init}{}
			\For{$v \in V$}
			\State $C_v =1$ ; $d_v = 0$ ; $\N(v) = NULL$
			\State $\U(v) = NULL$
			\For{$c \in \{1, \ldots, \Delta+1\}$}
			\State Add $c$ to $\F(v)$
			\EndFor
			\EndFor
			\State $L_{\HIGH} = NULL$
			\EndFunction
		\end{algorithmic}
		\label{algo: Init}
	\end{algorithm}
\end{minipage}
\begin{minipage}{0.5 \textwidth}
\vspace{-0.5cm}	
\begin{algorithm}[H]
	\footnotesize
	\caption{{\sf INSERT} handles insertion of edge $(u,v)$}
	\begin{algorithmic}[1]
		\Function{Insert}{$u,v$}
		\State Add $v$ to $\N(u)$ and $u$ to $\N(v)$
		\State Increment $d_u$ and $d_v$
		\If{$C_u == C_v$}
		\State RECOLOR($v$)
		\EndIf
		\If{$d_u == \sqrt{2m}}$
		\State Add $u$ to $L_{\HIGH}$
		\For{$x \in \N(u)$}
		\State $\CT_u[C_x]$ = $\CT_u[C_x]+1$
		\If{$\CT_u[C_x] == 1$}
		\State Move $C_x$ from $\F(u)$ to $\U(u)$
		\EndIf
		\EndFor
		\EndIf
		\If{$d_u \ge \sqrt{2m}}$
		\State $\CT_u[C_v]$ = $\CT_u[C_v]+1$
		\If{$\CT_u[C_v] == 1$}
		\State Move $C_v$ from $\F(u)$ to $\U(u)$
		\EndIf
		\EndIf
		\State Repeat steps 6-15 with respect to vertex $v$
		\EndFunction
	\end{algorithmic}
	\label{algo: Insert}
\end{algorithm}
\vspace{2mm}
\end{minipage}
\hspace{3mm}
\begin{minipage}{0.55 \textwidth}
\vspace{-0.3cm}	
\begin{algorithm}[H]
	\caption{{\sf RECOLOR} changes the color of vertex $v$}
	\footnotesize
	\begin{algorithmic}[1]
		\Function{RECOLOR}{$v$} 
		\If{$v \in L_{\HIGH}$}
		\State{Select a color $c$ from $\F(v)$}
		\Else
		\State{Scan $\N(v)$ and find a free color $c$}
		\EndIf
		\State{$C_{v\_old} = C_v$}
		\State{$C_v = c$}
		\For{$w \in L_{\HIGH}$} 
		\If{$\A_v[w] \neq NULL$}
		\State{$\CT_w[C_{v\_old}]$ = $\CT_w[C_{v\_old}]-1$}
		\If{$\CT_w[C_{v\_old}]=0$}
		\State{Move $C_{v\_old}$ from $\U(w)$ to $\F(w)$}
		\EndIf
		\State{$\CT_w[C_v]$ = $\CT_w[C_v]+1$}
		\If{$\CT_w[C_v]=1$}
		\State{Move $C_v$ from $\F(w)$ to $\U(w)$}
		\EndIf
		\EndIf
		\EndFor
		\EndFunction
	\end{algorithmic}
	\label{algo : Recolor(v)}
\end{algorithm}
\vspace{0.1mm}
\end{minipage}
\textbf{Edge Insertion:}
Procedure {\sf INSERT(u,v)} (Algorithm~\ref{algo: Insert}) is called when an edge between vertices $u$ and $v$ is inserted. If vertex $u$ has the same color as vertex $v$ the we call Procedure RECOLOR($v$) (Algorithm~\ref{algo : Recolor(v)}). Procedure RECOLOR($v$) works as follows: if vertex $v$ is of high degree then a color $c$ is selected from the list $F(v)$. Otherwise, we search neighbors of $v$ and find a free color $c$. We assign color $c$ to vertex $v$ and inform all its high degree neighbors about the change in color of $v$. Time taken by RECOLOR($v$) is $O(\sqrt{m})$. It is straightforward that {\sf INSERT} runs in $O(\sqrt{m})$ time.\\\\\\\\
\noindent
\textbf{Edge Deletion:}
Procedure {\sf DELETE(u,v)} (Algorithm~\ref{algo: Delete}) is called when an edge between vertices $u$ and $v$ is deleted. 
It is straightforward that {\sf DELETE(u,v)} runs in $O(\sqrt{m})$ time.
\begin{algorithm}[H]
	\footnotesize
	\caption{{\sf DELETE} handles deletion of edge $(u,v)$}
	\begin{algorithmic}[1]
		\Function{Delete}{$u,v$}
		\State Remove $u$ from $\N(v)$ and $v$ from $\N(u)$
		\State Decrement $d_u$ and $d_v$
		\If{$d_u == \sqrt{2m}-1$}
		\For{$c \in \U(u)$}
		\State Move $c$ from $\U(u)$ to $\F(u)$
		\EndFor
		\EndIf
		\If{$d_ u \ge \sqrt{2m}$}
		\State $\CT_u[C_v]$ = $\CT_u[C_v]-1$ 
		\If{$\CT_u[C_v] == 0$}
		\State Move $C_v$ from $\U(u)$ to $\F(u)$
		\EndIf
		\EndIf
		\State Repeat steps 4-10 with respect to vertex $v$
		\EndFunction
	\end{algorithmic}
	\label{algo: Delete}
\end{algorithm}
Therefore, updates take $O(\sqrt{m})$ time and queries take $O(1)$ time in the worst case.
\begin{theorem}
	\label{thm:FullyDynamicGeneralGraph}
	Starting with an empty graph with $n$ vertices, a $(\Delta+1)$-coloring, performing one recoloring in the worst case, can be maintained deterministically over any sequence of insertion and deletion in worst case $O(\sqrt{m})$ update time and worst case $O(1)$ query time.	 	
\end{theorem}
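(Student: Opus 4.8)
The plan is to establish four things about the algorithm described above: (i) it always maintains a proper $(\Delta+1)$-coloring; (ii) every update modifies the colour of at most one vertex; (iii) every update runs in worst-case $O(\sqrt m)$ time; and (iv) every query runs in worst-case $O(1)$ time. Claims (ii) and (iv) are immediate and I would dispatch them first: $\getColor$$(v)$ just returns the stored field $C_v$, and a colour field is written only inside {\sf RECOLOR}, which is invoked only from {\sf INSERT} and only on the single endpoint $v$ of the inserted edge, while {\sf DELETE} never recolours.

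For correctness (i) I would carry a single invariant through all three operations: (a) $C$ is a proper colouring of the current graph; (b) $d_v=|\N(v)|$ and $\A_v$ indexes $\N(v)$, so testing ``$w\in\N(v)$'' costs $O(1)$; (c) $v\in L_{\HIGH}$ iff $d_v\ge\sqrt{2m}$; and (d) for every $v\in V_{\HIGH}$, $\CT_v[i]$ equals the number of neighbours of $v$ coloured $i$, the colour $i$ lies in $\U(v)$ precisely when $\CT_v[i]\ge1$ (otherwise in $\F(v)$), and the augmenting arrays $\UC_v,\FC_v$ are consistent. The base case is the empty graph. For the inductive step on $\Insert$$(u,v)$: inserting the edge can break (a) only on $(u,v)$ and only when $C_u=C_v$, and then {\sf RECOLOR}$(v)$ reassigns $v$ a colour that is \emph{free} for it --- one held by no neighbour --- which restores (a); here I would note that a free colour always exists because $v$ has at most $\Delta$ neighbours against $\Delta+1$ colours, so in the high-degree case the list $\F(v)$ from which {\sf RECOLOR} picks is nonempty by (d), and in the low-degree case a scan of $\N(v)$ produces one. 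I would then verify that the count/used/free bookkeeping {\sf RECOLOR} performs for every high-degree neighbour of $v$, together with the code in {\sf INSERT} that fires when $u$ (resp.\ $v$) crosses the threshold $\sqrt{2m}$ --- rebuilding $\CT_u,\U(u),\F(u)$ by one scan of $\N(u)$ --- or is already above it --- bumping $\CT_u[C_v]$ for the newly created edge --- restores (b)--(d). For $\Delete$$(u,v)$ the colouring stays proper for free, and I would just check that the degree decrements, the teardown of $\U(u)$ when $u$ drops out of $L_{\HIGH}$, and the decrement of $\CT_u[C_v]$ when $u$ stays high-degree (and symmetrically for $v$) restore (b)--(d).

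For the time bound (iii) I would first record that $|L_{\HIGH}|\le\sqrt{2m}$, since every high-degree vertex contributes at least $\sqrt{2m}$ to $\sum_v d_v=2m$, and that the doubly linked lists $\N(v),\U(v),\F(v)$ augmented by their index arrays support in $O(1)$ time the operations: locate/insert/delete a given edge or colour, move a colour between $\U(v)$ and $\F(v)$, and extract one element of a nonempty $\F(v)$. Granting this: {\sf RECOLOR}$(v)$ picks $c$ in $O(1)$ time when $v\in V_{\HIGH}$ and in $O(d_v)=O(\sqrt m)$ time otherwise (scan $\N(v)$, mark neighbour colours in a shared scratch array of size $\Delta+1$, return the smallest unmarked colour, which must be among the first $d_v+1$), and then its notification loop over $L_{\HIGH}$ costs $O(\sqrt m)$; {\sf INSERT} adds only $O(1)$ overhead apart from the one-time scan of $\N(u)$ of length exactly $\sqrt{2m}$ when $u$ turns high-degree; {\sf DELETE} adds only $O(1)$ apart from the scan of $\U(u)$, of length at most $d_u+1\le\sqrt{2m}+1$, when $u$ drops below the threshold. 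All three are therefore $O(\sqrt m)$.

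I expect the real difficulty to be invariant (d): keeping $\CT_v,\U(v),\F(v)$ faithful to the actual colours of $v$'s neighbours across every operation, and in particular across the two delicate moments --- a vertex entering or leaving $V_{\HIGH}$, where its used/free structures must be built or discarded, and {\sf RECOLOR} changing a colour, where the change must be propagated to every high-degree neighbour --- so that the order in which the edge is added, the vertex is recoloured, and the counts are updated leaves no neighbour counted twice or omitted. Once (d) is nailed down, the existence of a free colour, the one-recolouring bound, and the running-time accounting are all routine.
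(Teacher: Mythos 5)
Your proposal is correct and follows essentially the same route as the paper: the paper's argument is exactly the algorithm description (high/low split at threshold $\sqrt{2m}$, the $\U/\F/\CT$ bookkeeping for high-degree vertices, recoloring by picking from $\F(v)$ or scanning $\N(v)$, and notifying the at most $\sqrt{2m}$ high-degree vertices) followed by the observation that each procedure runs in $O(\sqrt{m})$ time. If anything, your invariant (d) and the explicit remarks on the existence of a free color and on the delicate threshold-crossing moments are more careful than the paper's ``it is straightforward'' treatment.
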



\subsection{Low arboricity graphs}
\label{sec:Low-Arboricity-Graphs}
In this section we present a fully dynamic algorithm for maintaining a $(\Delta+1)$-coloring in graphs with bounded arboricity.
Arboricity $\gamma$ of a graph $G(V,E)$ is defined as:
\begin{center}
	\footnotesize	
	$\displaystyle {\gamma = \max_{U \subseteq V}\ceil[\bigg]{ \frac{|E(U)|}{|U|-1}}}$
\end{center}
where $E(U)$ = $\{(u,v) \in E | u,v \in U\}$. A dynamic graph is said to have arboricity $\gamma$ iff the arboricity of the graph remains bounded by $\gamma$ after any update. A $\beta$-orientation of an undirected graph $G(V,E)$ is a directed graph $H(V,A)$ where $A$ contains the edges in $E$ with a direction assigned to it such that the out-degree of every vertex in $H$ is at most $\beta$.
It is  known due to Nash-Williams~\cite{Nash-Williams} that a graph $G(V,E)$ has arboricity $\gamma$ iff $E$ can be partitioned as $E_1, E_2, \dots,E_{\gamma}$ such that $(V,E_i)$ is a forest for all $1 \leq i \leq \gamma$. This implies that a graph with arboricity $\gamma$ has a $\gamma$-orientation.  
Brodal and Fagerberg~\cite{DBLP:conf/wads/BrodalF99} studied the problem of maintaining $\gamma$-orientation for a graph with arboricity $\gamma$ in the fully dynamic setting. We use the following result
from their paper ~\cite{DBLP:conf/wads/BrodalF99} for our fully dynamic coloring algorithm.
\begin{theorem}
	\cite{DBLP:conf/wads/BrodalF99}	
	There exists a fully dynamic algorithm $\mathcal{B}$ that maintains an $O(\gamma)$ orientation with an amortized update time of $O(\gamma + \log n)$ in a dynamic graph with $n$ vertices and arboricity $\gamma$.
\end{theorem}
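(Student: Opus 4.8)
The plan is to reconstruct a version of the Brodal--Fagerberg data structure. Fix a threshold $d = 2\gamma$ (any constant multiple of $\gamma$ exceeding $2\gamma-1$ works and yields an $O(\gamma)$-orientation, as required). We maintain an orientation $\vec{G}$ of the current graph together with the invariant that every vertex has out-degree at most $d$; for each vertex we store the list of its out-edges and an out-degree counter, and with each undirected edge we keep pointers into its two endpoint records, so that reversing the orientation of a single edge costs $O(1)$, and an adjacency query is answered by scanning the two (length-$\le d$) out-lists. The single combinatorial fact we use repeatedly is the density bound implied by arboricity $\gamma$: every subgraph $H$ satisfies $|E(H)| \le \gamma(|V(H)|-1) < \gamma |V(H)|$, so $H$ has a vertex of degree $< 2\gamma$; by peeling off minimum-degree vertices and orienting their edges outward one sees that a $d$-orientation with $d \ge 2\gamma-1$ always exists, which is exactly what the invariant asserts. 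A deletion of $(u,v)$ merely removes the edge from the two out-lists (one removal is vacuous) and decrements one out-degree counter; since out-degrees never increase, the invariant is preserved for free and the cost is $O(1)$.

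\textbf{Insertion and reorientation.} To insert $(u,v)$ we add it oriented, say from the endpoint of currently smaller out-degree; if both out-degrees remain $\le d$ we are done in $O(1)$ time. Otherwise exactly one vertex, call it $w$, now has out-degree $d+1$, and we rebalance: find a directed path $w = x_0 \to x_1 \to \cdots \to x_k$ in $\vec{G}$ whose last vertex $x_k$ has out-degree strictly below $d$, and reverse the orientation of every edge on it. This decreases $\delta^+(w)$ by one, increases $\delta^+(x_k)$ by one (so it becomes $\le d$), and leaves every other out-degree unchanged; hence the invariant is restored after $k$ single-edge reversals.

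\textbf{A short rebalancing path always exists.} Let $R_\ell$ denote the set of vertices reachable from $w$ by a directed path of length at most $\ell$ in $\vec{G}$. If $R_\ell$ contains no vertex of out-degree $< d$, then every out-edge leaving a vertex of $R_\ell$ has its head in $R_{\ell+1}$, so $|E(R_{\ell+1})| \ge d\,|R_\ell| \ge 2\gamma\,|R_\ell|$, while the density bound gives $|E(R_{\ell+1})| < \gamma\,|R_{\ell+1}|$; therefore $|R_{\ell+1}| > 2\,|R_\ell|$. Since $|R_0| = 1$ and no vertex set exceeds $n$ elements, within $\lfloor \log_2 n \rfloor$ steps we must reach a vertex of out-degree $< d$. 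Thus a rebalancing path exists (so the procedure is well defined) and the shortest one has length $k \le \log_2 n$, so each insertion performs only $O(\log n)$ edge reversals.

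\textbf{Analysis and the main obstacle.} Combining the above, a deletion costs $O(1)$ and an insertion costs $O(1)$ bookkeeping plus $O(\log n)$ single-edge reversals plus the cost of \emph{finding} a rebalancing path; so the content of the theorem is showing that this path search runs in amortized $O(\gamma + \log n)$ time, and this is the step I expect to be the main obstacle. A naive breadth-first search from $w$ can inspect $\Theta(\gamma n)$ edges in one insertion. Two observations tame it: every vertex the search touches has at most $d = O(\gamma)$ out-edges, so the per-vertex cost is $O(\gamma)$; and an amortized argument via a potential measuring how ``overloaded'' the current orientation is — one that rises by a bounded amount on each insertion and is paid down by the reorientations a search triggers — charges the rare expensive searches against the many cheap ones. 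Exhibiting this potential and checking that insertions, deletions, and reorientations each move it by the right amount is the only non-routine computation; with it in hand, the $O(\gamma)$ out-degree bound and the $O(\gamma + \log n)$ amortized update time follow immediately.
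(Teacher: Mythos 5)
There is a genuine gap, and you name it yourself: the entire content of the theorem is the amortized bound on the cost of restoring the invariant, and your proposal stops exactly there. The parts you do carry out (the density bound from arboricity, existence of a $(2\gamma)$-orientation, $O(1)$ deletions, existence of a directed path of length at most $\log_2 n$ from an overloaded vertex to one of out-degree $< d$, hence $O(\log n)$ reversals per insertion) are correct but easy; the hard part is that \emph{finding} such a path can force the search to examine a reachability set $R_\ell$ that doubles per level only in the bad case, and in general a BFS from $w$ may scan $\Theta(\gamma n)$ edges. Your claim that ``an amortized argument via a potential measuring how overloaded the orientation is'' handles this is an assertion, not an argument: no potential is exhibited, and it is not at all clear that one with the stated properties exists, since after a rebalancing the orientation is again tight (all out-degrees can be close to $d$), so nothing obviously prevents every subsequent insertion from triggering an expensive search. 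In other words, the step you defer is not a routine verification but the theorem itself, so the proposal does not establish the $O(\gamma+\log n)$ amortized update time.

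For comparison: the paper does not prove this statement at all; it invokes it as a black box from Brodal and Fagerberg \cite{DBLP:conf/wads/BrodalF99}. Their algorithm is also different from the one you sketch: rather than searching for and reversing a single directed path, an overloaded vertex simply has \emph{all} of its out-edges flipped, repeatedly, until no vertex exceeds the threshold, and the amortized bound is obtained not by a local potential but by charging the online algorithm's reorientations against an arbitrary offline sequence of bounded-out-degree orientations of the evolving graph (the $\log n$ term then comes from exhibiting a good offline witness). If you want to complete a proof along your path-reversal line, you would need either a genuinely worked-out amortization of the search cost or a switch to an analysis of this witness-comparison type; as written, the decisive step is missing.
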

\textbf{Overall approach:} Every vertex $v$ maintains a list of neighbors {\sf N(v)} and a list of outgoing neighbors {\sf OUT(v)} 
(it contains $w$ if there is an edge $(v,w)$ directed from $v$ to $w$). Along with this,
every vertex maintains a \textit{partially correct} list {\sf Free(v)} of free colors. 
The list is \textit{partially correct} because if a color $c$ is present in the list {\sf Free(v)} then $c$ is not used by any of its neighbors in {\sf N(v) $\setminus$ OUT(v)} 
but $c$ may be used by some neighbors in {\sf OUT(v)}. Therefore, the free color information is correct with respect to the neighbors in {\sf N(v) $\setminus$ OUT(v)} 
(whose size is unbounded) but may be incorrect with respect to the neighbors in {\sf OUT(v)} (whose size is at most $O(\gamma)$). 
During our algorithm, a vertex $v$ is responsible for notifying  its color $C_v$ to all the neighbors in {\sf OUT(v)}. 
Therefore, whenever we need to recolor a vertex $v$, we call procedure {\sf RECOLOR-LowArboricity} (Algorithm~\ref{algo: RECOLOR-LowArboricity}) which works as follows: for every vertex $w \in {\sf OUT(v)}$, if $C_w$ is present in {\sf Free(v)} then we remove $C_w$ from {\sf Free(v)}. We pick a color $c$ from {\sf Free(v)} and assign it to $v$. For every vertex $w \in {\sf OUT(v)}$, if color $c$ is present in {\sf Free(w)} then it is removed from {\sf Free(w)}. {\sf RECOLOR-LowArboricity} takes $O(\gamma)$ time in the worst case.\\
\textbf{Data Structures:} We maintain the following for every vertex $v$:
\begin{itemize}
	\item We store the color $C_v$ assigned to $v$.	
	\item List of neighbors {\sf N(v)} as doubly linked list augmented with an array $A_v$ of size $n$.
	\item List of out-going neighbors {\sf OUT(v)} as doubly linked list augmented with an array $O_v$ of size $n$.
	\item \textit{Partially correct} list of free colors {\sf Free(v)} augmented with an array ${\sf Free\_{A_v}}$ of size $\Delta + 1$. 
	\item An array ${\sf COUNT_v}$ of size $\Delta+1$. The $i$-{th} index of ${\sf COUNT_v}$ denotes the number of neighbors of $v$ in ${\sf N(v) \setminus OUT(v)}$ which are colored using the color $i$.\\ 
\end{itemize}
Insert, Delete, and Search operations in the above data structures takes $O(1)$ time. Our algorithm works as follows:\\
\textbf{Orientation:} We call procedure {\sf Update-Orientation} (Algorithm~\ref{algo: Update-Orientation}) after every update. We use algorithm $\mathcal{B}$ to maintain $\gamma$ orientation. Then we update our data structures {\sf OUT} and {\sf Free}. If there are total $t$ edge re-orientations this step takes $O(t)$ time. Therefore, during entire run of the algorithm, this procedure takes amortized $O(\gamma + \log n)$ time.
\begin{algorithm}[H]
	\footnotesize
	\caption{{\sf Update-Orientation} maintains orientation of edges after an update}
	\begin{algorithmic}[1]
		\Function{Update-Orientation}{$u,v$}
		\State Run Algorithm $\mathcal{B}$
		\If{$(u,v)$ was inserted and directed from $u$ to $v$}
		\State Insert $v$ to ${\sf OUT(u)}$
		\If {$C_u$ is present in ${\sf Free(v)}$}
		\State Remove $C_u$ from ${\sf Free(v)}$
		\EndIf 
		\State ${\sf COUNT_v[C_u] = COUNT_v[C_u] + 1}$
		\EndIf
		\If{$(u,v)$ was deleted and directed from $u$ to $v$}
		\State Remove $v$ from ${\sf OUT(u)}$
		\State ${\sf COUNT_v[C_u] = COUNT_v[C_u] - 1}$
		\If {${\sf COUNT_v[C_u] == 0}$}
		\State Insert ${\sf C_u}$ to ${\sf Free(v)}$
		\EndIf
		\EndIf
		\For {every edge $(x,y)$ that was re-oriented to $(y,x)$ by algorithm $\mathcal{B}$}
		\State Remove $y$ from {\sf OUT(x)} and Insert $x$ to {\sf OUT(y)}
		\State ${\sf COUNT_y[C_x] = COUNT_y[C_x] - 1}$
		\State ${\sf COUNT_x[C_y] = COUNT_x[C_y] + 1}$
		\If {${\sf COUNT_y[C_x] == 0}$}
		\State Insert ${\sf C_x}$ to ${\sf Free(y)}$
		\EndIf
		\If {${\sf COUNT_x[C_y] == 1}$}
		\State Remove ${\sf C_y}$ from ${\sf Free(x)}$
		\EndIf
		\EndFor
		\EndFunction
	\end{algorithmic}
	\label{algo: Update-Orientation}
\end{algorithm}
\noindent
\textbf{Edge-Insertion:} We call procedure {\sf INSERT-LowArboricity(u,v)}(Algorithm~\ref{algo: INSERT-LowArboricity}) when an edge between two vertices $u$ and $v$ is inserted. We call procedure {\sf Update-Orientation(u,v)}. If vertex $u$ has same color as vertex $v$ and the edge is directed from $u$ to $v$ then we call procedure {\sf RECOLOR-LowArboricity(v)}. Otherwise, if vertex $u$ has same color as vertex $v$ and the edge is directed from $v$ to $u$ then we call procedure {\sf RECOLOR-LowArboricity(u)}. Other than call to {\sf Update-Orientation(u,v)}, remaining steps in {\sf INSERT-LowArboricity(u,v)} takes $O(\gamma)$ time in the worst case.\\
\begin{minipage}{0.5 \textwidth}
\begin{algorithm}[H]
	\footnotesize
	\caption{{\sf INSERT-LowArboricity(u,v)} handles insertion of edge $(u,v)$}
	\begin{algorithmic}[1]
		\Function{Insert-LowArboricity}{u,v}
		\State {Insert $v$ to ${\sf N(u)}$ and $u$ to ${\sf N(v)}$}
		\State Run Algorithm {\sf Update-Orientation(u,v)}
		\If {${\sf C_u == C_v}$}
		\If{direction is from $u$ to $v$}
		\State {\sf RECOLOR-LowArboricity(v)}
		\Else
		\State {\sf RECOLOR-LowArboricity(u)}
		\EndIf
		\EndIf
		\EndFunction
	\end{algorithmic}
	\label{algo: INSERT-LowArboricity}
\end{algorithm}
\vspace{-0.7cm}
\begin{algorithm}[H]
	\footnotesize
	\caption{{\sf DELETE-LowArboricity(u,v)} handles deletion of edge $(u,v)$}
	\begin{algorithmic}[1]
		\Function{DELETE-LowArboricity}{u,v}
		\State {Delete $v$ to ${\sf N(u)}$ and $u$ to ${\sf N(v)}$}
		\State Run Algorithm {\sf Update-Orientation(u,v)}
		\EndFunction
	\end{algorithmic}
	\label{algo: DELETE-LowArboricity}
\end{algorithm}
\vspace{0.8mm}
\end{minipage}
\hspace{2mm}
\begin{minipage}{0.5\textwidth}
\vspace{-0.2cm}	
\begin{algorithm}[H]
	\footnotesize
	\caption{{\sf RECOLOR-LowArboricity(v)} changes the color of vertex $v$}
	\begin{algorithmic}[1]
		\Function{RECOLOR-LowArboricity}{v}
		\For{${\sf w \in OUT(v)}$}
		\If{$\sf C_w \in Free(v)$}
		\State Remove $\sf C_w$ from $\sf Free(v)$
		\EndIf
		\EndFor
		\State Select a color $c$ from {\sf Free(v)}
		\State ${\sf C_{v-old} = C_v}$
		\State $C_v = c$
		\For{${\sf w \in OUT(v)}$}
		\State ${\sf COUNT_w[C_{v-old}] =  COUNT_w[C_{v-old}] - 1}$
		\If {${\sf COUNT_w[C_{v-old}] == 0}$}
		\State Insert ${\sf C_{v-old}}$ to {\sf Free(w)}
		\EndIf
		\State ${\sf COUNT_w[C_{v}] =  COUNT_w[C_{v}] + 1}$
		\If {${\sf COUNT_w[C_{v}] == 1}$}
		\State Remove ${\sf C_{v}}$ from {\sf Free(w)}
		\EndIf		
		\EndFor	  
		\EndFunction
	\end{algorithmic}
	\label{algo: RECOLOR-LowArboricity}
\end{algorithm}	
\end{minipage}
\textbf{Edge-Deletion:} We call procedure {\sf DELETE-LowArboricity(u,v)} (Algorithm~\ref{algo: DELETE-LowArboricity}) when an edge between two vertices $u$ and $v$ is deleted. We call {\sf Update-Orientation(u,v)}. Other than call to {\sf Update-Orientation(u,v)}, remaining steps in {\sf DELETE-LowArboricity(u,v)} takes $O(1)$ time in the worst case.	
\begin{theorem}
\label{thm:FullyDynamicLowArboricity}
Starting with an empty graph with $n$ vertices with arboricity bounded by $\gamma$, a $(\Delta+1)$-coloring, performing one recoloring in the worst case, can be maintained deterministically over any sequence of insertion and deletion in amortized $O(\gamma + \log n)$ update time and worst case $O(1)$ query time.  
\end{theorem}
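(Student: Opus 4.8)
The plan is to prove correctness and the claimed bounds by combining the $O(\gamma)$-orientation maintained by the algorithm $\mathcal{B}$ of Brodal and Fagerberg (used as a black box) with a set of invariants on the auxiliary structures ${\sf OUT}(\cdot)$, ${\sf COUNT}_{\cdot}$ and ${\sf Free}(\cdot)$. Concretely, I would first state and maintain, as invariants holding at the end of every update, the following: (i) ${\sf OUT}(v)$ is exactly the out-neighbourhood of $v$ in the orientation currently held by $\mathcal{B}$, so $|{\sf OUT}(v)| = O(\gamma)$; (ii) ${\sf COUNT}_v[c]$ equals the number of vertices in ${\sf N}(v)\setminus{\sf OUT}(v)$ coloured $c$; and (iii) ${\sf Free}(v)$ is \emph{partially correct}, i.e. no colour in ${\sf Free}(v)$ is used by a vertex of ${\sf N}(v)\setminus{\sf OUT}(v)$, and ${\sf Free}(v)$ is moreover ``complete'' in the sense that it always contains some colour no neighbour of $v$ uses at all. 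I would then verify (i)--(iii) are preserved by {\sf Update-Orientation}, which touches a vertex's structures exactly on the insertion, deletion, or re-orientation of an incident edge and in each case moves the relevant colour between ${\sf Free}$ and ``used'' in lock-step with the $\pm 1$ change to ${\sf COUNT}$, and by {\sf RECOLOR-LowArboricity}, which on changing $C_v$ notifies precisely the vertices of ${\sf OUT}(v)$ --- the only ones whose ${\sf COUNT}$ array counts $v$.

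Given the invariants, correctness is short. The empty graph is properly $1$-coloured; a deletion changes no colour; and on inserting $(u,v)$ with $C_u = C_v$ the algorithm recolours the head $w$ of the oriented copy of the edge, drawing the new colour from ${\sf Free}(w)$ after deleting the (at most $O(\gamma)$) colours of ${\sf OUT}(w)$. By (ii)--(iii) together with the pigeonhole bound (degree at most $\Delta$, palette size $\Delta + 1$), this pruned list is non-empty and every colour in it is unused by \emph{all} neighbours of $w$, so the new colour conflicts with nobody; since no other vertex changes colour, the recolouring does not cascade. This simultaneously yields properness, at most $\Delta + 1$ colours, at most one recolouring per update, and $O(1)$ worst-case queries because $C_v$ is stored explicitly.

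For the running time I would charge each update as follows: {\sf RECOLOR-LowArboricity}$(v)$ does $O(1)$ work per element of ${\sf OUT}(v)$ in two passes, hence $O(\gamma)$ in the worst case; {\sf INSERT-LowArboricity} and {\sf DELETE-LowArboricity} do $O(1)$ work on the adjacency lists, one call to {\sf Update-Orientation}, and at most one {\sf RECOLOR-LowArboricity}; and {\sf Update-Orientation} runs $\mathcal{B}$, which by the cited theorem costs amortized $O(\gamma + \log n)$ per update, plus $O(1)$ extra work per edge re-orientation that $\mathcal{B}$ performs. Since every re-orientation is already accounted for within $\mathcal{B}$'s amortized bound, this extra work only multiplies that bound by a constant, so each update is amortized $O(\gamma + \log n)$ and the theorem follows.

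The step I expect to be the crux is invariant (iii) --- guaranteeing that ${\sf Free}(v)$ is never so depleted that {\sf RECOLOR}$(v)$ fails. Since {\sf RECOLOR}$(v)$ deletes the ${\sf OUT}(v)$-colours from ${\sf Free}(v)$ and those deletions are not obviously undone by later orientation updates, a careless reading lets ${\sf Free}(v)$ drift below the true set of colours free for the non-out-neighbours of $v$. The cleanest remedy is to tie ${\sf Free}(v)$ exactly to ${\sf COUNT}_v$, so that ${\sf Free}(v) = \{c : {\sf COUNT}_v[c] = 0\}$ outside the body of {\sf RECOLOR}, which forces {\sf RECOLOR}$(v)$ to reinstate any pruned colour $c$ with ${\sf COUNT}_v[c] = 0$ before returning; with that in place the pigeonhole argument above applies verbatim, and the rest is bookkeeping entirely parallel to the general-graph $(\Delta+1)$-coloring algorithm of the previous section.
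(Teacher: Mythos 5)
Your proposal follows the paper's approach essentially verbatim: maintain an $O(\gamma)$-orientation with Brodal--Fagerberg, keep a \emph{partially correct} free list and a ${\sf COUNT}$ array per vertex, and on a conflict recolor only the head of the oriented edge by pruning the $O(\gamma)$ colors of ${\sf OUT}(v)$ from ${\sf Free}(v)$ and picking from what remains. The one place you go beyond the paper is exactly the crux you flag: the paper's {\sf RECOLOR-LowArboricity} deletes $C_w$ from ${\sf Free}(v)$ for every $w\in{\sf OUT}(v)$ and never reinstates these colors, so after repeated recolorings ${\sf Free}(v)$ can become a strict subset of $\{c : {\sf COUNT}_v[c]=0\}$ and the pigeonhole argument for non-emptiness of the pruned list is no longer available; the paper does not address this. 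Your strengthened invariant ${\sf Free}(v)=\{c:{\sf COUNT}_v[c]=0\}$ outside of {\sf RECOLOR} (equivalently, reinstating any pruned color with zero count before returning, at an extra $O(\gamma)$ cost absorbed by the stated bound) is the right repair, and with it the rest of your argument --- correctness, one recoloring, $O(1)$ queries, and the amortized $O(\gamma+\log n)$ charge via the orientation algorithm's own accounting of re-orientations --- matches the paper's analysis.
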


\bibliography{Reference}
\appendix

\end{document}